\DeclareMathOperator{\sinc}{sinc}
\begin{document}
	\newtheorem{theorem}{\bf~~Theorem}
	\newtheorem{remark}{\bf~~Remark}
	\newtheorem{observation}{\bf~~Observation}
	\newtheorem{definition}{\bf~~Definition}
	\newtheorem{lemma}{\bf~~Lemma}
	\newtheorem{preliminary}{\bf~~Preliminary}
	\newtheorem{proposition}{\bf~~Proposition}
	\newtheorem{comment}{\bf~~Comment}
	\renewcommand\arraystretch{0.9}
	\title{\huge{Dual-Polarized Reconfigurable Intelligent Surface-Based Antenna for Holographic MIMO Communications}}
	\author{\normalsize \IEEEauthorblockN{
			{Shuhao Zeng}, \IEEEmembership{\normalsize Member, IEEE},
			{Hongliang Zhang}, \IEEEmembership{\normalsize Member, IEEE},
			{Boya Di}, \IEEEmembership{\normalsize Member, IEEE},\\
			{Zhu Han}, \IEEEmembership{\normalsize Fellow, IEEE},
			{and H. Vincent Poor}, \IEEEmembership{\normalsize Fellow, IEEE}}
		\thanks{Shuhao Zeng is with School of Electronic and Computer Engineering, Peking University Shenzhen Graduate School, Shenzhen, China, and also with Department of Electrical and Computer Engineering, Princeton University, NJ, USA (email: shuhao.zeng96@gmail.com).}
		\thanks{Hongliang Zhang and Boya Di are with School of Electronics, Peking University, Beijing 100871, China (email: hongliang.zhang@pku.edu.cn; boya.di@pku.edu.cn).}
		\thanks{Zhu Han is with Electrical and Computer Engineering Department, University of Houston, Houston, TX, USA, and also with the Department of Computer Science and Engineering, Kyung Hee University, Seoul, South Korea (email: hanzhu22@gmail.com).}		
		\thanks{H. Vincent Poor is with Department of Electrical and Computer Engineering, Princeton University, NJ, USA (email: poor@princeton.edu).}
	}
	
	\maketitle
	\begin{abstract}
	Holographic multiple-input-multiple output (HMIMO), which is enabled by large-scale antenna arrays with quasi-continuous apertures, is expected to be an important technology in the forthcoming 6G wireless network. Reconfigurable intelligent surface~(RIS)-based antennas provide an energy-efficient solution for implementing HMIMO. Most existing works in this area focus on single-polarized RIS-enabled HMIMO, where the RIS can only reflect signals in one polarization towards users and signals in the other polarization cannot be received by intended users, leading to degraded data rate. To improve multiplexing performance, in this paper, we consider a dual-polarized RIS-enabled single-user HMIMO network, aiming to optimize power allocations across polarizations and analyze corresponding maximum system capacity. However, due to {interference between different polarizations}, the dual-polarized system cannot be simply decomposed into two independent single-polarized ones. Therefore, existing methods developed for the single-polarized system cannot be directly applied, which makes the optimization and analysis of the dual-polarized system challenging. To cope with this issue, we derive an asymptotically tight upper bound on the ergodic capacity, based on which the power allocations across two polarizations are optimized. Potential gains achievable with such dual-polarized RIS are analyzed. Numerical results verify our analysis.

	\end{abstract}
	\begin{IEEEkeywords}
	Dual-polarized reconfigurable intelligent surface, holographic MIMO, 6G, multiplexing 
	
	\end{IEEEkeywords}
		
	\section{Introduction}
	To meet the stringent data rate requirement of the future sixth generation~(6G) wireless networks, holographic multiple-input-multiple-output~(HMIMO) is a key conceptual enabler, {where a large number of tiny elements are integrated into a compact space~\cite{A_HMIMO_2020,Gong_holographic_2024}. Benefited from the large radiation aperture of the resulting antenna array, HMIMO can achieve high directive gain~\cite{Balanis_book_antenna_theory_2016,Ji_Extra_2023}}, and thus is capable of supporting high-speed data transmissions. However, limited by practical power budgets, it is difficult to implement HMIMO with a conventional phased array since it requires numerous energy-intensive phase shifters, leading to unacceptable power consumption~\cite{Zeng_holographic_multi_TWC_2023}. Unlike phased arrays, reconfigurable intelligent surface~(RIS)-based antennas serve as more energy efficient enablers of HMIMO~\cite{Zeng_holographic_multi_TWC_2023}. In particular, an RIS consists of an array of sub-wavelength elements~\cite{Zeng_coverage_2021}, which can reflect incident electromagnetic~(EM) signals and apply adjustable phase shifts~\cite{Palma_transmit_array_2015}. Different from a phased array, RIS elements realize phase tunability through ultra-low-power diodes without the need of phase shifters~\cite{Zeng_both_2021}. Therefore, by utilizing RIS-based antennas, the base station~(BS) can generate highly directive beams towards users with significantly reduced power consumption compared with conventional phased arrays.
	
	Existing works on RIS-enabled HMIMO mainly focus on single-polarized RIS~\cite{Zeng_holographic_multi_TWC_2023,Han_mmWave_2020}. More specifically, wireless signals are essentially a kind of electromagnetic~(EM) waves, which exhibit polarization characteristics, i.e., {the direction of electric field strength is time-variant~\cite{Wei_Tri_2023}}. Since the electric field strength can only oscillate in the plane vertical to the propagation direction of the EM waves, there are two orthogonal polarizations, e.g., horizontal polarization and vertical polarization. The single-polarized RIS considered in existing works can only reflect signals in one polarization towards users and signals in the other polarization cannot be received by the users, leading to degraded communication performances. 
	
	To fully exploit the polarization dimension, in this paper, we consider a dual-polarized RIS-aided single-user HMIMO communication system. By deploying a dual-polarized RIS at the BS as the transmit antenna, the number of antenna elements can be doubled in the same physical enclosure, since each dual-polarized RIS element consists of two co-located single-polarized elements~\cite{Ramezani_dual_polarized_2023}, as shown in Fig.~\ref{sysmodel}. Besides, benefiting from the capability of reflecting signals in two polarization simultaneously while applying independently controllable phase shifts~\cite{Qian_Low_complexity_2022}, a dual-polarized RIS can improve multiplexing performance. In such a dual-polarized system, our goal is to optimize power allocations across polarizations and analyze the corresponding maximum system capacity, which, however, is challenging. This is because dual-polarized signals suffer from cross-polarization component during propagation~\cite{Nabar_polarization_2002}, leading to {interference between different polarizations}. Therefore, the dual-polarized system cannot be simply decomposed into two independent single-polarized ones, which indicates that existing optimization and analysis frameworks originally designed for the single-polarized system cannot be directly applied.

While dual-polarized multi-antenna communications have been extensively studied, the focus has predominantly been on conventional massive MIMO equipped with dual-polarized antennas~\cite{Sena_MIMO_NOMA_2019,Qian_tensor_2018,Gong_secure_communication_2017,Ozdogan_dual_polarized_2023} and dual-polarized RIS-based relay~\cite{yu_dual_polarized_2022,Bhowal_D2D_RIS_2023}. For example, in~\cite{Sena_MIMO_NOMA_2019}, the authors consider a multi-cluster multi-user massive MIMO network with non-orthogonal multiple access (NOMA) and a dual-polarized antenna array, where the outage probability, diversity gain, and outage sum-rate are derived. In~\cite{Ozdogan_dual_polarized_2023}, the spectral efficiency of a dual-polarized massive MIMO network is maximized by optimizing the power allocation over different users and polarizations. The authors in~\cite{yu_dual_polarized_2022} consider a multi-user MIMO network assisted by a dual-polarized RIS-based relay, where the ergodic capacity is analyzed and the RIS phase shifts in two polarizations are jointly optimized to improve the capacity. 

\begin{figure}[!t]
	\centering
	\includegraphics[width=0.45\textwidth]{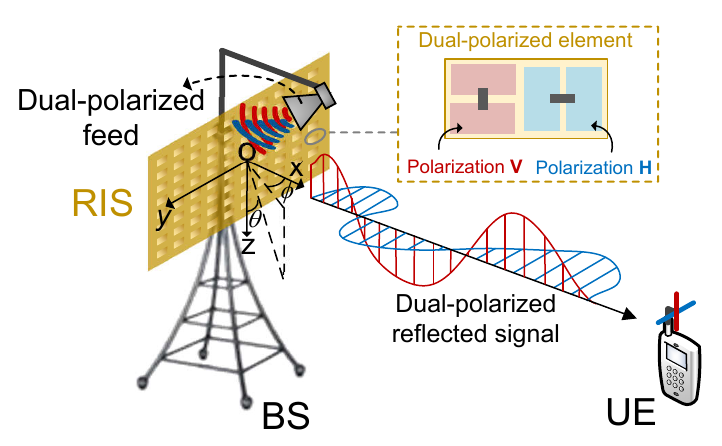}
			\vspace{1.5mm}
	\caption{System model of a dual-polarized downlink single-user network, where an RIS is deployed as the BS antenna.}
	\label{sysmodel}
\end{figure}

However, existing optimization and analysis frameworks developed for dual-polarized multi-antenna communication systems cannot be directly applied to the dual-polarized RIS-enabled HMIMO due to the following two reasons. \emph{First}, unlike conventional massive MIMO with dual-polarized antennas, the radiation efficiency of antenna elements is correlated with incident angles in dual-polarized RIS-based HMIMO due to its unique free-space propagation based feeding mechanism~\cite{Xiao_helicity_2016}. Note that the feed is in the near-field of the RIS, as shown in Fig.~\ref{sysmodel}. Therefore, incident angles vary among different RIS elements. Moreover, the incident angles corresponding to different polarizations also differ, as shown in Fig.~\ref{fig_incident_angle}. As a result, the radiation efficiency varies across different RIS elements and polarizations. This discrepancy from the conventional dual-polarized massive MIMO, where antenna elements typically possess equal radiation efficiencies, indicates the inapplicability of existing frameworks. \emph{Second}, unlike the dual-polarized relay which are deployed far from the BS, the RIS-based antenna is much closer to the BS such that the BS cannot be assumed to locate in the far field of the RIS, as shown in Fig.~\ref{sysmodel}. Therefore, the frameworks developed for the dual-polarized RIS-based relay are inapplicable. 


{We would like to point out that in addition to HMIMO communications, RIS has also been utilized to facilitate computational imaging. For example, the authors in~\cite{Sun_computational_2024} design a holographic RIS-aided computational imaging system, where a holographic RIS is deployed to focus the main energy radiated by the transmitter towards the target so as to accurately estimate the scattering densities of the target. The phase pattern of the RIS is optimized and imaging performances (i.e., normalized mean-square-error~(NMSE)) are analyzed. However, unlike existing RIS-enabled computational imaging systems, we aim to improve the ergodic capacity of RIS-enabled holographic MIMO systems. Therefore, existing optimization and analysis frameworks designed for RIS-enabled computational imaging systems cannot be directly applied.}

	To cope with the above challenges, we first model the dual-polarized RIS-based HMIMO channels. To facilitate analysis, a tight upper bound on the ergodic capacity is derived, based on which the power allocations across polarizations are optimized. {Given the optimal power allocations, we compare the multiplexing gain and system capacity between the dual-polarized and single-polarized RIS given cross-polarization discrimination~(XPD)}. Numerical results verify our analysis and demonstrate the advantage of the dual-polarized RIS. The main contributions of this paper are summarized below,
	
	\begin{itemize}
		\item  We model a dual-polarized RIS-aided single-user downlink HMIMO network by taking into account the angle-dependence of RIS elements and near-field spherical-wave propagation from the BS to the RIS. To facilitate analysis, an asymptotically tight upper bound on the ergodic capacity is derived, based on which optimal power allocations over different polarizations are derived in closed-form.
		
		\item To quantify the benefit brought by dual-polarizations, we compare the multiplexing gain and ergodic capacity upper bound between the dual-polarized RIS-aided system and its single-polarized counterpart given XPD. Further, a closed-form threshold is derived for the XPD, and when the XPD exceeds this threshold, the capacity upper bound on the dual-polarized system is more than twice that of the single-polarized one.
		
		\item Simulation results verify the analytical results, and the impact of the deployment of the feed and the impact of the feed gain are also numerically demonstrated.
	\end{itemize}

The rest of this paper is organized as follows. In Section~\ref{sec_system_model}, a dual-polarized RIS-aided HMIMO network is modeled. An upper bound on the ergodic capacity of such a dual-polarized system is derived in Section~\ref{sec_ergodic_capacity}, based on which the power allocations over different polarizations are optimized in Section~\ref{sec_power_polarization}. In Section~\ref{sec_cmp_single_polarized}, the performance of a dual-polarized RIS is compared against that of the single-polarized alternative, followed by simulation results presented in Section~\ref{sec_simulation}. Finally, conclusions are drawn in Section~\ref{sec_conclusion}.

{\textit{Notation}: Scalars are denoted by italic letters, and vectors and matrices are denoted by bold-face lower-case and uppercase letters, respectively. For a complex scalar $x$, $x^*$ denotes its conjugate, $|x|$ denotes its modulus, and $\angle(x)$ denotes its phase. For a vector $\bm{v}$, $\bm{v}^T$ denotes its transpose, $\|\bm{v}\|$ denotes its Euclidean norm, and $\mathrm{diag}(\bm{v})$ denotes the diagonal matrix whose diagonal element is the corresponding element in $\bm{v}$. For a matrix $\bm{F}$, $\bm{F}^{\dagger}$ denotes its conjugate transpose, and $\bm{F}(i,j)$ denotes the element in the $i$-th row and the $j$-th column. Further, $\det(\bm{F})$ represents the determinant of matrix $\bm{F}$. $\mathrm{Tr}(\bm{A})$ represents the trace of square matrix $\bm{A}$. Furthermore, $E(\cdot)$ denotes the expectation, and $\mathcal{CN}(0,1)$ denotes the zero-mean complex Gaussian distribution with unit variance.}
	
	
	
\begin{figure}[!t]
	\centering
	\includegraphics[width=0.35\textwidth]{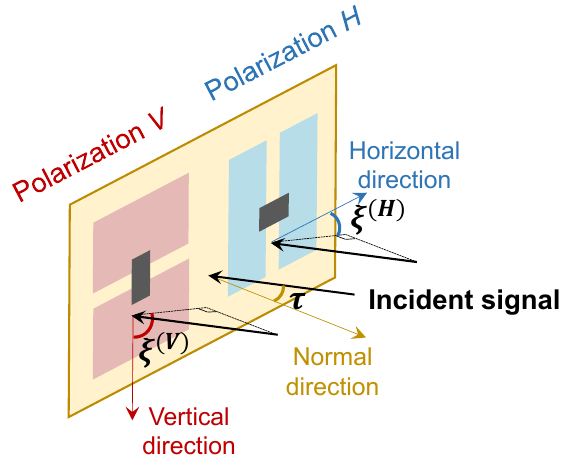}
	\caption{Illustration of incident angles with respect to one $V$-polarized and one $H$-polarized RIS element.}
	\label{fig_incident_angle}
\end{figure}
\vspace{0.5cm}
	\section{System Model}
	\label{sec_system_model}
	In this section, a dual-polarized RIS-aided downlink system is first introduced in Section~\ref{scenario_description}, where the BS employs an RIS illuminated by a feed as the transmit antenna. A model of the RIS, a propagation model from the feed to the RIS, and a propagation model from the RIS to the user equipment~(UE) are then constructed in Section~\ref{section_RRS}, Section~\ref{model_feed_2_RIS}, and Section~\ref{channel_model}, respectively.
	\subsection{Scenario Description}
	\label{scenario_description}
	As shown in Fig.~\ref{sysmodel}, we consider a narrow-band downlink network with one BS and one UE\footnote{The analysis methods developed for dual-polarized RIS-aided networks with one single user in this paper cannot be directly applied to multi-user networks because of the presence of inter-user interference and the unavailability of closed-form solutions for optimal RIS reflection coefficients. Due to the space limit, the extension to the multi-user case is left as future work.}. To direct the transmitted wave of the BS towards the UE, we deploy one dual-polarized RIS close to the dual-polarized feed at the BS for beamforming, where the feed is in the near-field of the RIS. {Besides, one omni-directional dual-polarized antenna is utilized at the UE}. Each dual-polarized antenna or RIS element is composed of one vertical and one horizontal polarized component that are co-located~\cite{Ozdogan_dual_polarized_2023}; these are denoted by polarizations $V$ and $H$, respectively\footnote{The analysis in this paper holds for any set of two orthogonal polarizations.}. The radiation pattern of the feed in polarization $i$ can be given by\footnote{Such radiation pattern model has been widely used in the literature~\cite{Tang_path_loss_2021,Zeng_holographic_multi_TWC_2023,stutzman_antenna_2012}, which is applicable to various antenna designs by selecting an appropriate parameter $\kappa$}~\cite{Zeng_holographic_single_2022}
	\begin{align}
		\label{pattern_feed}
		G_F^{(i)}({\bm{r}})=\left\{
		\begin{aligned}
			&\kappa(\bm{r}\cdot\bm{n})^{\frac{\kappa}{2}-1},~\bm{r}\cdot\bm{n}\ge 0,\\
			&0,~\text{otherwise}.
		\end{aligned}
		\right.
	\end{align}
	Here, $\kappa$ denotes the gain of the feed, $\bm{r}$ is a unit vector representing the transmitting direction, and $\bm{n}$ is a unit vector aligned with the main lobe direction of the feed.
	
	For ease of exposition, we introduce Cartesian coordinates, where the $yoz$ plane coincides with the RIS and the $x$-axis is vertical to the RIS, as shown in Fig.~\ref{sysmodel}. Under the introduced coordinate system, the feed's main lobe direction can be expressed as $\bm{n}=(\cos\eta,\cos\beta,\cos\gamma)$, where $\{\eta,\beta,\gamma\}$ represent the angles between $\bm{n}$ and $\{x,y,z\}$-axis, respectively. Besides, we use $r_F$, $\theta_F$, and $\phi_F$ to represent the distance between the feed and the origin, the zenith angle of the feed, and the azimuth angle of the feed, respectively. Then, the coordinate of the feed is given by $\bm{q}_F=(r_F\sin\theta_F\cos\phi_F,r_F\sin\theta_F\sin\phi_F,r_F\cos\theta_F)^{\mathrm{T}}$. Similarly, for the UE, let $r_U$ represent its distance to the center of the RIS, and $\theta_U$ and $\phi_U$ denote the zenith and azimuth angles, respectively. Therefore, the coordinate of the UE can be expressed as $\bm{q}_U=(r_U\sin\theta_U\cos\phi_U,r_U\sin\theta_U\sin\phi_U,r_U\cos\theta_U)^{\mathrm{T}}$.

	\subsection{Dual-Polarized Reconfigurable Intelligent Surfaces}
	\label{section_RRS}
	The dual-polarized RIS consists of $N_R$ sub-wavelength dual-polarized reflective elements. Denote the area of each RIS element by $s_R$ and thus the overall size of the RIS is $S_R=N_Rs_R$. Besides, unlike conventional single-polarized RIS element, each dual-polarized RIS element contains two sets of diodes as shown in Fig.~\ref{sysmodel}, which can be either varactor diodes~\cite{J_varactor_2021} or positive-intrinsic-negative~(PIN)~\cite{Zeng_holographic_single_2022}. By separately changing the biased voltages applied to the two sets of diodes (i.e., \textit{ON} and \textit{OFF} states), the dual-polarized RIS element can adjust the phases of the signals in different polarizations \emph{independently}~\cite{Wu_transmitarray_2022}. 
	
	Denote the reflection amplitudes and phase shifts of the $n$-th RIS element corresponding to polarizations $V$ and $H$ by $(A^{(V)}_{n},\varphi^{(V)}_{n})$ and $(A^{(H)}_{n},\varphi^{(H)}_{n})$, respectively. Unlike conventional massive MIMO, the signal generated by the RF chain is fed into different RIS elements through free-space propagation, where the reflection amplitudes of the RIS elements are dependent on incident angles~\cite{Xiao_helicity_2016}, i.e.,
	\begin{align}
		\label{angle_dependence}
		A^{(V)}_n=A(\xi_n,\tau_n^{(V)}),\\
		A^{(H)}_n=A(\xi_n,\tau_n^{(H)}).
	\end{align} 
	Here, $(\xi_n,\tau_n^{(V)})$ and $(\xi_n,\tau_n^{(H)})$ represent the incident angles with respect to the $n$-th $V$-polarized and the $n$-th $H$-polarized RIS element, respectively, as shown in Fig.~\ref{fig_incident_angle}. Besides, functions $A(\xi,\tau)$ describes the relationship between the reflection amplitude and incident angle, which can be expressed as
	\begin{align}
		&A(\xi,\tau)\notag\\
		&\!=\!\frac{\left|\exp\left(2j\arctan(\frac{\tan(\frac{\varphi_0}{2})+\tau}{\cos\xi})\right)\!-\!\exp\left(2j\arctan(\frac{\tan(\frac{\varphi_0}{2})-\tau}{\cos\xi})\right)\right|}{2},
	\end{align}
	Here, $\varphi_0$ is the phase shift induced by the RIS element under normal incidence. Note that the azimuth component of the incident angles varies between different polarizations, i.e., $\tau_n^{(V)}\neq\tau_n^{(H)}$, as shown in Fig.~\ref{fig_incident_angle}. Therefore, the reflection amplitudes corresponding to different polarizations generally take different values, i.e., 
	\begin{align}
		\label{unequal_amplitude_polarization}
		A^{(V)}_n\neq A^{(H)}_n.
	\end{align}
	
	Then, the reflection coefficients of the $n$-th RIS element in the two polarizations can be expressed as $\Gamma^{(V)}_{n}=A^{(V)}_{n}\exp{(j\varphi^{(V)}_{n})}$ and $\Gamma^{(H)}_{n}=A^{(H)}_{n}\exp{(j\varphi^{(H)}_{n})}$, respectively. By stacking the reflection coefficients of all RIS elements, we can obtain the following matrix
	\begin{equation}
	\label{Gamma}
	\bm{\Gamma}=\left[
	\begin{matrix}
	\bm{\Gamma}^{(V)} & 0  \\
	0 & \bm{\Gamma}^{(H)}
	\end{matrix}
	\right],
	\end{equation} 
	where $\bm{\Gamma}^{(V)}=\mathrm{diag}(\Gamma^{(V)}_{1},\dots,\Gamma^{(V)}_{n},\dots,\Gamma^{(V)}_{N_R})$ and $\bm{\Gamma}^{(H)}=\mathrm{diag}(\Gamma^{(H)}_{1},\dots,\Gamma^{(H)}_{n},\dots,\Gamma^{(H)}_{N_R})$.

	\subsection{Propagation Model from the Feed to the RIS}
	\label{model_feed_2_RIS}
	Since the RIS is deployed close to the feed while environmental scatters are usually far from both the feed and the RIS, the line-of-sight~(LoS) path from the feed to the RIS is much stronger than environment scattering paths. Therefore, we only consider the LoS path here and  utilize a deterministic model to characterize the propagation effect from the feed to the RIS. Note that both the feed and the RIS elements are dual-polarized. Therefore, the propagation matrix $\bm{B}\in \mathbb{C}^{2N_R\times 2}$ from the feed to the RIS can be written as
	\begin{equation}
	\label{propagation_coeff}
	\bm{B}=\left[
	\begin{matrix}
	\bm{b}^{(VV)} & \bm{b}^{(VH)}  \\
	\bm{b}^{(HV)} & \bm{b}^{(HH)}
	\end{matrix}
	\right],
	\end{equation}
	where the $n$-th element of the vector $\bm{b}^{(ji)}\in \mathbb{C}^{N_R\times 1}$, i.e., $b^{(ji)}_n$, represents the propagation coefficient from the feed in polarization $i$ to the $n$-th RIS element in polarization $j$, for any $i,j\in\{V,H\}$. To this end, the propagation matrix $\bm{B}$ describes the relation from $V$ to $V$, $H$ to $V$, $V$ to $H$, and $H$ to $H$ polarized waves. 
	
	We would like to point out that the interactions between transmitted waves and environmental scatterers serve as a major mechanism that can change the initial polarization state of the EM waves~\cite{Degli_Esposti_polarization_2011}. However, as we have mentioned, the propagation from the feed to the RIS is LoS dominated and such interaction is weak, and thus we assume that the propagations from the feed to the RIS do not change polarizations~\cite{Ozdogan_dual_polarized_2023}, i.e.,
	\begin{align}
	\bm{b}^{(HV)}=\bm{b}^{(VH)}=\bm{0}
	\end{align}
	
	Moreover, according to~\cite{yu_dual_polarized_2022}, the co-polarized propagation coefficients can be further modeled as
	\begin{align}
	b^{(VV)}_n=e^{j\varphi^{(VV)}}\hat{b}_n,\notag\\
	b^{(HH)}_n=e^{j\varphi^{(HH)}}\hat{b}_n,
	\end{align} 
	where $\varphi^{(VV)}$ and $\varphi^{(HH)}$ are the phase shifts in the corresponding polarizations, and $\hat{b}_n$ represents the shared component. Since the feed is in the near field of the RIS, $\hat{b}_n$ can be characterized by the Non-Uniform Spherical Wave~(NUSW) model~\cite{Zeng_holographic_multi_TWC_2023,haiquan_ELMIMO_2022},
	\begin{align}
	\label{propagation_effect}
	\hat{b}_n=\sqrt{\frac{G_{F,n} A_{F,n}}{4\pi (D_{n})^2}}\exp\left(-j\frac{2\pi}{\lambda}D_{n}\right),
	\end{align}
	where $G_{F,n}$ is the gain of the feed towards the direction of the $n$-th RIS element, $D_{n}$ represents the distance between the feed and this RIS element, $A_{F,n}$ is the projected aperture of the $n$-th RIS element towards the direction of the feed. Here, the projected aperture $A_{F,n}$ can be further expressed as $A_{F,n}=(-\bm{u}_x)^{\mathrm{T}}(\bm{q}_F-\bm{q}_{n})s_R/D_{n}$, where $\bm{q}_{n}$ and $\bm{q}_F$ represent the coordinates of the $n$-th RIS element and the feed, respectively, and $\bm{u}_x$ is the unit vector in the $x$-axis.
	\vspace{-.2cm}
	\subsection{Channel Model from the RIS to the UE}
	\label{channel_model}
	Similarly, since both the RIS elements and the UE antenna are dual-polarized, the channel matrix $\bm{H}\in \mathbb{C}^{2N_R\times 2}$ from the RIS elements to the UE consists of four components, i.e.,
	\begin{equation}
	\label{channel}
	\bm{H}=\left[
	\begin{matrix}
	\bm{h}^{(VV)} & \bm{h}^{(VH)}  \\
	\bm{h}^{(HV)} & \bm{h}^{(HH)}
	\end{matrix}
	\right],
	\end{equation}
	where, for any $i,j\in\{V,H\}$, the $n$-th element of the vector $\bm{h}^{(ji)}\in\mathbb{C}^{N_R\times 1}$, i.e., $h^{(ji)}_n$, is defined as the channel from the $n$-th RIS element in polarization $i$ to the UE antenna in polarization $j$. 
	
	We assume non-line-of-sight~(NLoS) communications between the RIS and the UE, and thus the corresponding channel can be modeled as the \emph{Rayleigh fading} channel~\cite{Ozdogan_dual_polarized_2023}, i.e.,
	\begin{align}
	\label{model_RRS_2_UE}
	h^{(ji)}_n=\sqrt{\beta^{(ji)}_n}\widetilde{h}^{(ji)}_n,
	\end{align}
	where $\beta^{(ji)}_n$ is the pathloss and $\widetilde{h}^{(ji)}_n$ represents the small scale fading with zero mean and unit variance, i.e., $\widetilde{h}^{(ji)}_n\sim \mathcal{CN}(0,1)$. 
	
	Due to  interactions between the reflected EM waves and environmental scatterers, the propagation from the RIS to the UE can change polarizations. To characterize the channel's ability to maintain radiated polarization purity between $H$ and $V$ polarized signals, cross-polarization discrimination~(XPD) serves as a common performance metric~\cite{bruno_MIMO_XPC_2013}, which is formally defined as
	\begin{align}
	\label{XPD_RRS_2_UE}
	XPD=\frac{\mathbb{E}\{|h^{(HH)}_n|^2\}}{\mathbb{E}\{|h^{(VH)}_n|^2\}}=\frac{\mathbb{E}\{|h^{(VV)}_n|^2\}}{\mathbb{E}\{|h^{(HV)}_n|^2\}}=\frac{1-l_{RU}}{l_{RU}},
	\end{align}
	for a coefficient $0\le l_{RU}\le 1$. Small values of $l_{RU}$ (i.e., high channel XPD) indicates that the received co-polarized signal at the UE is much stronger than the cross-polarized one, and thus the channels can maintaining radiated polarization purity. On the contrary, when $l_{RU}$ approaches $1$ (i.e., low channel XPD), most transmit power is converted to the orthogonal polarization. 
	
	By substituting (\ref{model_RRS_2_UE}) into (\ref{XPD_RRS_2_UE}), we have
	\begin{align}
	\label{cross_pathloss}
	\frac{\beta^{(HH)}_n}{\beta^{(VH)}_n}=\frac{\beta^{(VV)}_n}{\beta^{(HV)}_n}=\frac{1-l_{RU}}{l_{RU}}.
	\end{align}
	Before further modeling the pathloss, we would like to show a new characteristic of the channel between the RIS and the UE brought by the HMIMO. Due to the high energy and cost efficiency of RIS, it is feasible to utilize an extremely large RIS to provide significant beamforming gain. Therefore, the boundary between the near-field and far-field of the RIS, which is positively correlated with the size of RIS~\cite{YK_antenna_2008}, is comparable to the cell radius. As a result, the UE can locate in either the near-field or the far-field of the RIS. Unlike the far-field case, when the UEs are within the near field of the RIS, the variation of the received signal strength from different RIS elements should be taken into account. {To capture the characteristic of the channel in both the near-field and far-field cases, the following pathloss model is adopted~\cite{Ozdogan_dual_polarized_2023,Zeng_holographic_multi_TWC_2023}}
	\begin{align}
	\beta^{(HH)}_n&=\beta^{(VV)}_n=\beta_0d_n^{-\alpha}(1-l_{RU}),\\
	\beta^{(VH)}_n&=\beta^{(HV)}_n=\beta_0d_n^{-\alpha}l_{RU},
	\end{align}
	{where $\beta_0$ represents pathloss at unit distances}, and $d_n$ represents the distance between the $n$-th RIS element and the UE. Such model is also consistent with (\ref{cross_pathloss}).
	
	In the following, we will model the correlation structures of the channel. First, we focus on the correlation between \emph{different polarizations}. Various measurements indicate that the transmit and receive cross-polarization correlations are close to zero in NLoS scenarios~\cite{bruno_MIMO_XPC_2013}, i.e.,
	\begin{align}
	\mathbb{E}\{h^{(VV)}_n(h^{(VH)}_n)^*\}=\mathbb{E}\{h^{(HV)}_n(h^{(HH)}_n)^*\}=0,
	\end{align}
	and
	\begin{align}
	\mathbb{E}\{h^{(VV)}_n(h^{(HV)}_n)^*\}=\mathbb{E}\{h^{(HH)}_n(h^{(VH)}_n)^*\}=0,
	\end{align}
	respectively. In addition, both the co-polarization correlation and the anti-polarization correlation are also approximately zero~\cite{yu_dual_polarized_2022},
	\begin{align}
	\mathbb{E}\{h^{(VV)}_n(h^{(HH)}_n)^*\}=0,\\
	\mathbb{E}\{h^{(HV)}_n(h^{(VH)}_n)^*\}=0.
	\end{align}
	
	Since the spacing among RIS elements is less than half-wavelength, the \emph{spatial correlation} of their fading should also be taken into account. Define $\bm{R}^{(ji)}\in \mathbb{C}^{N_R\times N_R}$ as the correlation matrix of the small scale fading vector $\widetilde{\bm{h}}^{(ji)}=[\widetilde{h}_n^{(ji)}]_{n=1,\dots,N_R}$ ($i, j\in \{V,H\}$), i.e.,
	\begin{align}
	\label{def_correlation_matrix}
	\bm{R}^{(ji)}(n_1,n_2)=\mathbb{E}\{\widetilde{h}_{n_1}^{(ji)}(\widetilde{h}_{n_2}^{(ji)})^*\}.
	\end{align}
	Since the $V$- and $H$- polarized RIS elements/UE antenna are co-located, they see the same scattering environment, and thus we assume equal statistical properties~\cite{Oestges_dual_polarized_2008}, i.e.,
	\begin{align}
	\bm{R}^{(ji)}=\bm{R}, \forall i,j
	\end{align}
	where $\bm{R}$ can be further modeled by~\cite{emil_Rayleigh_2021},
	\begin{align}
	\label{correlation}
	\bm{R}(n_1,n_2)=\sinc\left(\frac{2\|\bm{q}_{n_1}-\bm{q}_{n_2}\|}{\lambda}\right).
	\end{align}
	From (\ref{correlation}), we can see that as the separation between the two RIS elements $n_1$ and $n_2$, i.e., $\|\bm{q}_{n_1}-\bm{q}_{n_2}\|$, becomes larger, the two RIS-based channels tend to be less correlated, which is consistent with practical results. 
	
	By combining (\ref{Gamma}), (\ref{propagation_coeff}), and (\ref{channel}), the received signal $\bm{y}\in\mathbb{C}^{2\times 1}$ at the UE can be written as,
	\begin{align}
	\label{rec_sig}
	\bm{y}=\bm{H}\bm{\Gamma}\bm{B}\bm{x}+\bm{n},
	\end{align}
	where $\bm{x}\in \mathbb{C}^{2\times 1}$ represents the transmitted signals of the $V$- and $H$-polarized feeds, and $\bm{n}\in \mathbb{C}^{2\times 1}$ is additive white Gaussian noise~(AWGN) at the UE with zero mean and $\sigma^2$ as variance.

	\section{Ergodic Capacity Analysis}	
	\label{sec_ergodic_capacity}
Define $\bm{G}=\bm{H}\bm{\Gamma}\bm{B}$ as the equivalent channel from the feed of the BS to the UE. Besides, assume that the channel state information~(CSI) is unknown at the transmitter. Define $\Lambda=\mathrm{diag}(\lambda^{(V)},\lambda^{(H)})$ as the power allocation matrix over different polarizations, where $\lambda^{(V)}$ and $\lambda^{(H)}$ represent transmit power allocated to polarization $V$ and $H$, respectively, satisfying $\lambda^{(H)}+\lambda^{(V)}\le 1$. Based on the received signal model in (\ref{rec_sig}), the ergodic capacity of the dual-polarized RIS-based systems is given by~\cite{Shin_Keyhole_2003,A_2005}
\begin{align}
\label{ergodic_capacity}
C_{dual}=\mathbb{E}\{\log_2\det(\bm{I}_2+\rho\bm{G}\bm{\Lambda}\bm{G}^{\dagger})\},
\end{align}
where $\mathbb{E}\{\cdot\}$ denotes expectation, $\rho$ represents transmit signal-to-noise ratio~(SNR), and $(\cdot)^{\dagger}$ denotes conjugate-transpose. Note that different from the conventional phased array, the RIS has several new degrees of freedom for design, i.e., the distance $r_F$ between the feed and the RIS, and the gain $\kappa$ of the feed. Therefore, the impacts of $r_F$ and $\kappa$ on the ergodic capacity $C_{dual}$ are discussed in the following remark.


\begin{remark}
	\label{remark_the_alpha_distance_on_capacity}
When the gain $\kappa$ of the feeds defined in (\ref{pattern_feed}) increases, the system capacity $C_{dual}$ first increases and then decreases.
\end{remark}
{
\begin{proof}
	We first explain why the system capacity increases with the feed gain $\kappa$ when the feed gain takes a small value. This is because as the feed gain becomes larger, the beams radiated by the feeds are more concentrated, and thus more power is captured by the RIS. 
	
	However, as $\kappa$ continues to grow, the capacity will degrade. This is because most power radiated by the feeds is captured by the RIS. Besides, the received power distribution over the RIS elements is more uneven, i.e., the elements in the center of the RIS receive more power while those at the edge receive less, which leads to performance loss. 
\end{proof}}

The analysis of the ergodic capacity in (\ref{ergodic_capacity}) is intractable. To acquire insights into how much gain can be harvested from the dual-polarization, we provide a more tractable upper bound on the ergodic system capacity in Section~\ref{subsection_upper_bound}. Then, the phase shifts of the RIS elements in different polarizations are jointly optimized to improve such upper bound in Section~\ref{subsection_phase_shift_optimization}.

\subsection{Upper Bound on the Ergodic Capacity}
\label{subsection_upper_bound}

By substituting (\ref{Gamma}), (\ref{propagation_coeff}), and (\ref{channel}) into the definition of the equivalent channel matrix $\bm{G}$, i.e., $\bm{G}=\bm{H}\bm{\Gamma}\bm{B}$, the entries of $\bm{G}$ can be rewritten as
\begin{align}
\label{G_11}
G_{1,1}=\bm{h}^{(VV)}\bm{\Gamma}^{(V)}\bm{b}^{(VV)},\\
\label{G_12}
G_{1,2}=\bm{h}^{(VH)}\bm{\Gamma}^{(H)}\bm{b}^{(HH)},\\
\label{G_21}
G_{2,1}=\bm{h}^{(HV)}\bm{\Gamma}^{(V)}\bm{b}^{(VV)},\\
\label{G_22}
G_{2,2}=\bm{h}^{(HH)}\bm{\Gamma}^{(H)}\bm{b}^{(HH)},
\end{align}
based on which an upper bound on the ergodic capacity can be derived, as shown in the following theorem.
\begin{theorem}
	\label{theorem_upper_bound}
	The ergodic system capacity in (\ref{ergodic_capacity}) can be upper bounded by
\begin{align}
C_{dual}\le \log_2\left(\mathbb{E}\{\det(\bm{I}_2+\rho\bm{G}\bm{\Lambda}\bm{G}^{\dagger})\}\right)\triangleq C_{dual}^{ub}.
\end{align}
To facilitate analysis, we rewrite the upper bound as in (\ref{C_R_ub_1}) shown at the bottom of the next page.
\begin{figure*}[!hb]
	\rule[-12pt]{17.5cm}{0.05em}
	\begin{equation}
	\setlength{\abovedisplayskip}{3pt}
	\setlength{\belowdisplayskip}{-3pt}
\begin{aligned}
\label{C_R_ub_1}
C_{dual}^{ub}&=\log_2\bigg(1+\rho\lambda^{(V)}\left(\mathbb{E}\{|G_{1,1}|^2+|G_{2,1}|^2\}\right)+\rho\lambda^{(H)}\left(\mathbb{E}\{|G_{1,2}|^2+|G_{2,2}|^2\}\right)\\
&\quad\quad\quad+\!\rho^2\lambda^{(V)}\lambda^{(H)}\left(\mathbb{E}\{|G_{1,1}|^2\}\mathbb{E}\{|G_{2,2}|^2\}\!+\!\mathbb{E}\{|G_{1,2}|^2\}\mathbb{E}\{|G_{2,1}|^2\}\right)\!\bigg),
\end{aligned}
	\end{equation}
\end{figure*}
\end{theorem}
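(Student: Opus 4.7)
The plan is to obtain the bound in two logically separate moves: a concavity argument for the first inequality, and an explicit $2\times 2$ determinant expansion combined with the second-order moment structure of the channel for the rewritten form in (\ref{C_R_ub_1}).

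First, I would apply Jensen's inequality. Since $\log_2(\cdot)$ is concave on the positive reals and $\det(\bm{I}_2+\rho\bm{G}\bm{\Lambda}\bm{G}^{\dagger})\ge 1$ almost surely (because $\bm{G}\bm{\Lambda}\bm{G}^{\dagger}$ is positive semidefinite), swapping $\mathbb{E}\{\cdot\}$ and $\log_2(\cdot)$ gives
\begin{equation*}
\mathbb{E}\{\log_2\det(\bm{I}_2+\rho\bm{G}\bm{\Lambda}\bm{G}^{\dagger})\}\le \log_2\mathbb{E}\{\det(\bm{I}_2+\rho\bm{G}\bm{\Lambda}\bm{G}^{\dagger})\},
\end{equation*}
which is the claimed inequality $C_{dual}\le C_{dual}^{ub}$.

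Next, I would expand the determinant. Setting $\bm{A}\triangleq\bm{G}\bm{\Lambda}\bm{G}^{\dagger}$, which is Hermitian $2\times 2$, the identity $\det(\bm{I}_2+\rho\bm{A})=1+\rho\,\mathrm{Tr}(\bm{A})+\rho^2\det(\bm{A})$ holds. Using (\ref{G_11})--(\ref{G_22}), the diagonal entries of $\bm{A}$ are $\lambda^{(V)}|G_{1,1}|^2+\lambda^{(H)}|G_{1,2}|^2$ and $\lambda^{(V)}|G_{2,1}|^2+\lambda^{(H)}|G_{2,2}|^2$, so $\mathbb{E}\{\mathrm{Tr}(\bm{A})\}$ immediately yields the two linear-in-$\rho$ terms in (\ref{C_R_ub_1}). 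For the quadratic term I would compute $\det(\bm{A})=\bm{A}(1,1)\bm{A}(2,2)-|\bm{A}(1,2)|^2$ and collect, observing that the $\lambda^{(V)^2}$ and $\lambda^{(H)^2}$ contributions cancel between the two pieces, leaving
\begin{equation*}
\det(\bm{A})=\lambda^{(V)}\lambda^{(H)}\bigl(|G_{1,1}|^2|G_{2,2}|^2+|G_{1,2}|^2|G_{2,1}|^2-2\mathrm{Re}\{G_{1,1}G_{2,1}^*G_{1,2}^*G_{2,2}\}\bigr).
\end{equation*}

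The main obstacle is showing that the mixed term $\mathbb{E}\{G_{1,1}G_{2,1}^*G_{1,2}^*G_{2,2}\}$ vanishes and that the remaining two fourth-order moments factor. For the mixed term, I would expand each $G_{i,j}$ via (\ref{G_11})--(\ref{G_22}) into sums indexed by RIS elements, producing fourth-order moments of the form $\mathbb{E}\{\widetilde h_{n_1}^{(VV)}(\widetilde h_{n_2}^{(HV)})^*(\widetilde h_{n_3}^{(VH)})^*\widetilde h_{n_4}^{(HH)}\}$. Applying Isserlis' theorem for zero-mean circularly symmetric complex Gaussians, each such moment decomposes into sums of products of two pairwise expectations, and every admissible pairing involves at least one cross-polarization correlation that is zero by the assumptions stated after (\ref{correlation}); hence the whole mixed term is zero. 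For the remaining two terms, the vanishing of all cross-polarization covariances combined with joint Gaussianity yields independence between $\bm{h}^{(VV)}$ and $\bm{h}^{(HH)}$, and between $\bm{h}^{(VH)}$ and $\bm{h}^{(HV)}$, so $\mathbb{E}\{|G_{1,1}|^2|G_{2,2}|^2\}=\mathbb{E}\{|G_{1,1}|^2\}\mathbb{E}\{|G_{2,2}|^2\}$ and analogously for the other product. Substituting these identities into the determinant expansion gives exactly (\ref{C_R_ub_1}), completing the proof.
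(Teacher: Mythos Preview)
Your proposal is correct and follows essentially the same route as the paper's proof: Jensen's inequality for the bound, then a $2\times 2$ determinant expansion of $\bm{A}=\bm{G}\bm{\Lambda}\bm{G}^{\dagger}$ combined with the independence of the four polarization channel vectors to factor the fourth-order moments. The only minor difference is that for the mixed term $\mathbb{E}\{G_{1,1}G_{2,1}^*G_{1,2}^*G_{2,2}\}$ you invoke Isserlis' theorem element-by-element, whereas the paper observes more directly that $G_{1,1},G_{1,2},G_{2,1},G_{2,2}$ depend on disjoint (hence independent) Gaussian vectors $\bm{h}^{(VV)},\bm{h}^{(VH)},\bm{h}^{(HV)},\bm{h}^{(HH)}$, so the expectation factors into a product of four means, each zero by Rayleigh fading; this is simpler but equivalent to what you wrote.
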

\begin{proof}
See Appendix~\ref{appendix_upper_bound}.
\end{proof}
\begin{remark}
	\label{remark_asymptotically_tight}
	The derived upper bound $C_{dual}^{ub}$ is asymptotically tight as the SNR $\rho$ tends to zero.
\end{remark}
\begin{proof}
	See Appendix~\ref{appendix_asymptotically_tight}.
\end{proof}


\subsection{Optimization of RIS Phase Shifts}
\label{subsection_phase_shift_optimization}
According to (\ref{C_R_ub_1}), we can find that the upper bound $C_{dual}^{ub}$ depends on the reflection phase shifts of the RIS. To improve the system capacity, in the following theorem, the phase shifts $\varphi_n^{(V)}$ and $\varphi_n^{(H)}$ in the two polarizations are optimized.
\begin{theorem}
	\label{the_opt_phase}
	To improve the capacity upper bound $C_{dual}^{ub}$, the phase shifts of the RIS elements should be tuned such that the phases of the reflected signals of different RIS elements are aligned, i.e., 
	\begin{align}
	\label{opt_phase}
	\varphi_n^{(V)}=\varphi_n^{(H)}=-\angle \hat{b}_n=\frac{2\pi}{\lambda}D_n,
	\end{align}
	where $\hat{b}_n$ represents the propagation effect from the feed to the RIS defined in (\ref{propagation_effect}). 
	
	Given the phase shifts in (\ref{opt_phase}), we can rewrite the capacity upper bound in the form of equation (\ref{C_R_ub_max}), shown at the bottom of this page,
	\begin{figure*}[!hb]
		\rule[-12pt]{17.5cm}{0.05em}
		\begin{equation}
		\setlength{\abovedisplayskip}{3pt}
		\setlength{\belowdisplayskip}{0pt}
		\label{C_R_ub_max}
		C_{dual}^{ub}=\log_2\left(1+\rho \left(\lambda^{(H)}O^{(H)}+\lambda^{(V)}O^{(V)}\right)+\rho^2 \lambda^{(H)}\lambda^{(V)}O^{(H)}O^{(V)}\left(l_{RU}^2+(1-l_{RU})^2\right)\right),
		\end{equation}
	\end{figure*}
	where $l_{RU}$ characterizes the XPD of the channel from the RIS to the UE. Besides, the quantities of $O^{(V)}$ and $O^{(H)}$ in (\ref{C_R_ub_max}) are given by
	\begin{align}
		\label{O_V}
	O^{(V)}=\sum_{n_1}\sum_{n_2}A_{n_1}^{(V)}A_{n_2}^{(V)}|\hat{b}_{n_1}||\hat{b}_{n_2}|\bm{R}(n_1,n_2)\beta_0\sqrt{d_{n_1}^{-\alpha}d_{n_2}^{-\alpha}},\\
	\label{O_H}
	O^{(H)}=\sum_{n_1}\sum_{n_2}A_{n_1}^{(H)}A_{n_2}^{(H)}|\hat{b}_{n_1}||\hat{b}_{n_2}|\bm{R}(n_1,n_2)\beta_0\sqrt{d_{n_1}^{-\alpha}d_{n_2}^{-\alpha}}.
	\end{align}
	where $A_{n}^{(V)}$ and $A_{n}^{(H)}$ are the reflection amplitudes of the RIS elements under different polarizations.
\end{theorem}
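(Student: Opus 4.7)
My plan is to reduce the joint optimization over $\{\varphi_n^{(V)},\varphi_n^{(H)}\}$ to two decoupled phase-alignment problems, one per polarization, and then substitute the pathloss model to recover (\ref{C_R_ub_max}). Since $\log_2$ is monotonic, maximizing $C_{dual}^{ub}$ is equivalent to maximizing the polynomial in the four quantities $\mathbb{E}\{|G_{i,j}|^2\}$ appearing inside (\ref{C_R_ub_1}), and all coefficients $\rho\lambda^{(V)}$, $\rho\lambda^{(H)}$, $\rho^2\lambda^{(V)}\lambda^{(H)}$ are non-negative. Inspection of (\ref{G_11})--(\ref{G_22}) shows that $\mathbb{E}\{|G_{1,1}|^2\}$ and $\mathbb{E}\{|G_{2,1}|^2\}$ contain only $\bm{\Gamma}^{(V)}$, while $\mathbb{E}\{|G_{1,2}|^2\}$ and $\mathbb{E}\{|G_{2,2}|^2\}$ contain only $\bm{\Gamma}^{(H)}$. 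The two polarizations therefore separate, and it suffices to improve each pair in isolation.

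Next I would expand $|G_{1,1}|^2$ by inserting $h_n^{(VV)}=\sqrt{\beta_n^{(VV)}}\widetilde{h}_n^{(VV)}$ and $b_n^{(VV)} = e^{j\varphi^{(VV)}}|\hat{b}_n|e^{-j\frac{2\pi}{\lambda}D_n}$, take expectation, and invoke the spatial correlation $\mathbb{E}\{\widetilde{h}_{n_1}^{(ji)}(\widetilde{h}_{n_2}^{(ji)})^*\}=\bm{R}(n_1,n_2)$ to obtain a double sum of the form $\sum_{n_1,n_2} a_{n_1}a_{n_2}\bm{R}(n_1,n_2)\exp(j(\theta_{n_1}-\theta_{n_2}))$, with $a_n = \sqrt{\beta_n^{(VV)}}A_n^{(V)}|\hat{b}_n|\ge 0$ and $\theta_n = \varphi_n^{(V)} - \frac{2\pi}{\lambda}D_n$. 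Writing this as $\bm{v}^{\dagger}\bm{R}\bm{v}$ with $v_n = a_n e^{j\theta_n}$, the triangle-inequality bound $|\sum v_{n_1}^{*}v_{n_2}\bm{R}(n_1,n_2)|\le\sum a_{n_1}a_{n_2}\bm{R}(n_1,n_2)$, together with the fact that the dominant entries of the sinc-correlation matrix at sub-wavelength spacing are non-negative, shows that driving every $\theta_n$ to a common value improves the sum. The coherent-combining choice $\theta_n\equiv 0$, i.e., $\varphi_n^{(V)} = \frac{2\pi}{\lambda}D_n = -\angle\hat{b}_n$, achieves this, and it simultaneously improves $\mathbb{E}\{|G_{2,1}|^2\}$ because the same phase-difference pattern appears there (only the pathloss coefficient $\beta_n^{(HV)}$ replaces $\beta_n^{(VV)}$). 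A symmetric argument for the $H$-polarization gives $\varphi_n^{(H)} = \frac{2\pi}{\lambda}D_n$, establishing (\ref{opt_phase}).

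Finally, inserting the pathloss model $\beta_n^{(VV)}=\beta_n^{(HH)}=\beta_0 d_n^{-\alpha}(1-l_{RU})$ and $\beta_n^{(VH)}=\beta_n^{(HV)}=\beta_0 d_n^{-\alpha}l_{RU}$ into the four phase-aligned sums gives $\mathbb{E}\{|G_{1,1}|^2\}=(1-l_{RU})O^{(V)}$, $\mathbb{E}\{|G_{2,1}|^2\}=l_{RU}O^{(V)}$, $\mathbb{E}\{|G_{2,2}|^2\}=(1-l_{RU})O^{(H)}$, and $\mathbb{E}\{|G_{1,2}|^2\}=l_{RU}O^{(H)}$, with $O^{(V)},O^{(H)}$ exactly as in (\ref{O_V})--(\ref{O_H}). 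Plugging these back into (\ref{C_R_ub_1}), the linear-in-$\rho$ terms collapse to $\rho(\lambda^{(V)}O^{(V)} + \lambda^{(H)}O^{(H)})$ since $(1-l_{RU})+l_{RU}=1$, and the quadratic term expands to $\rho^2\lambda^{(V)}\lambda^{(H)}O^{(V)}O^{(H)}\bigl[(1-l_{RU})^2+l_{RU}^2\bigr]$, yielding (\ref{C_R_ub_max}).

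The main obstacle I anticipate is the rigor of the phase-alignment step when the sinc correlation matrix $\bm{R}$ has sign changes. The proposed phase profile is the natural coherent-combining choice and is exactly optimal whenever $\bm{R}$ has non-negative entries; in the general PSD case it should be justified as improving rather than strictly maximizing the bound, which is consistent with the theorem's wording \emph{"to improve the capacity upper bound"}.
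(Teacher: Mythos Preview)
Your proposal is correct and follows essentially the same route as the paper's proof: decouple the two polarizations via (\ref{G_11})--(\ref{G_22}), expand each $\mathbb{E}\{|G_{i,j}|^2\}$ into a double sum using (\ref{model_RRS_2_UE}) and (\ref{def_correlation_matrix}), set the phases to cancel $\angle\hat{b}_n$, and substitute the pathloss model to recover (\ref{C_R_ub_max}). Your treatment is in fact more careful than the paper's, which simply asserts that the four expectations are ``maximized'' at the phase choice (\ref{opt_phase}) without discussing the sign pattern of the sinc correlation matrix; your closing remark about ``improve'' versus ``maximize'' correctly identifies a subtlety the paper leaves implicit.
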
 
\begin{proof}
	See Appendix~\ref{app_opt_phase}
\end{proof}
Based on Theorem~\ref{the_opt_phase}, we can derive the following two remarks.

\begin{remark}
	The optimal reflection phase shifts depend only on the setup of the BS, i.e., the relative deployment location of the feed with respect to the RIS. Therefore, the configuration of the reflection phase shifts does not require any CSI between the RIS and the UE, which indicates that the proposed phase shift configuration scheme matches the considered case where the CSI is unknown to the BS.
\end{remark}
\begin{definition}
	Similar to the discussions in~\cite{yu_dual_polarized_2022}, we define \emph{phase adjustment between two polarizations} as the difference of the optimal phase shifts between the two polarizations.
\end{definition} 
\begin{remark}
	\label{remark_phase_adjustment}
	According to (\ref{opt_phase}), phase adjustment between two polarizations are not required to maximize the system capacity in an RIS-aided dual-polarized system.
\end{remark}
\vspace{.1cm}

%
\vspace{-0.2cm}
\section{Power Allocation across Polarizations}
\label{sec_power_polarization}
The introduction of dual-polarization provides a new degree of freedom for system design, i.e., power allocations $\lambda^{(V)}$ and $\lambda^{(H)}$ over different polarization directions. In this section, we aim to optimize the power allocations $\lambda^{(V)}$ and $\lambda^{(H)}$ so as to maximize the capacity upper bound $C_{dual}^{ub}$ in (\ref{C_R_ub_max}). Formally, the power allocation problem can be written as
\begin{subequations}\label{power_allocation_polarization}
	\begin{align}
	&\max_{\lambda^{(V)},\lambda^{(H)}} C_{dual}^{ub}\\
	\label{cons_1}
	s.t.&~\lambda^{(V)}+\lambda^{(H)}\le 1,\\
	&~\lambda^{(V)},\lambda^{(H)}\ge 0.
	\end{align}
\end{subequations}
According to (\ref{C_R_ub_max}), we can find that $C_{dual}^{ub}$ is increasing in both $\lambda^{(V)}$ and $\lambda^{(H)}$. Therefore, when the power allocations are maximized, the total transmit power achieves the maximum allowed transmit power, i.e., $\lambda^{(V)}+\lambda^{(H)}= 1$. Then, by substituting $\lambda^{(H)}=1-\lambda^{(V)}$ into (\ref{power_allocation_polarization}), the power allocation problem can be reformulated as
\begin{subequations}\label{power_allocation_polarization_v2}
	\begin{align}
	&\max_{\lambda^{(V)}} C_{dual}^{ub}\\
	\label{cons_1_v2}
	s.t.&~0\le \lambda^{(V)}\le 1,
	\end{align}
\end{subequations}
based on which we can acquire the following theorem.
\begin{theorem}
	The optimal power allocations for problem (\ref{power_allocation_polarization}) are given by
	\begin{align}
	\label{optimal_power_allocation_V}
	(\lambda^{(V)})^*=\left\{
	\begin{aligned}
	&1,~\lambda_0>1,\\
	&\lambda_0,~\lambda_0\in (0,1),\\
	&0,~\lambda_0<0.
	\end{aligned}
	\right.
	\end{align}
	\begin{align}
	\label{optimal_power_allocation_H}
(\lambda^{(H)})^*=1-(\lambda^{(V)})^*,
	\end{align}
	where $\lambda_0$ is given by
	\begin{align}
		\label{optimal_power_allocation_lambda_0}
	\lambda_0=\frac{1}{2}+\frac{1}{2\rho \left(l_{RU}^2+(1-l_{RU})^2\right)}\frac{O^{(V)}-O^{(H)}}{O^{(V)}O^{(H)}}.
	\end{align}
	Moreover, the maximized $C_{dual}^{ub}$ is given by (\ref{C_R_ub_max_max}) at the bottom of the next page.
	\begin{figure*}[!hb]
		\rule[-12pt]{17.5cm}{0.05em}
		\begin{equation}
		\setlength{\abovedisplayskip}{3pt}
		\setlength{\belowdisplayskip}{3pt}
		\label{C_R_ub_max_max}
		C_{dual}^{ub}=\log_2\left(1+\rho \left((\lambda^{(H)})^*O^{(H)}+(\lambda^{(V)})^*O^{(V)}\right)+\rho^2 (\lambda^{(H)})^*(\lambda^{(V)})^*O^{(H)}O^{(V)}\left(l_{RU}^2+(1-l_{RU})^2\right)\right).
		\end{equation}
	\end{figure*}
\end{theorem}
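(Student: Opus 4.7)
The plan is to reduce the problem to maximizing a concave quadratic in a single variable on $[0,1]$ and then project the unconstrained optimizer onto the feasible interval. The first step is to note that $\log_2(\cdot)$ is strictly monotone, so it suffices to maximize its argument. Using the earlier simplification $\lambda^{(V)}+\lambda^{(H)}=1$, I would substitute $\lambda^{(H)}=1-\lambda^{(V)}$ into (\ref{C_R_ub_max}), yielding a function of the single variable $\lambda^{(V)}\in[0,1]$.

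Next I would inspect the algebraic form. Writing $x=\lambda^{(V)}$, the argument of the log becomes
\begin{equation}
f(x)=1+\rho\bigl((1-x)O^{(H)}+xO^{(V)}\bigr)+\rho^{2}x(1-x)O^{(H)}O^{(V)}\bigl(l_{RU}^{2}+(1-l_{RU})^{2}\bigr).
\end{equation}
This is a quadratic in $x$ with leading coefficient $-\rho^{2}O^{(H)}O^{(V)}\bigl(l_{RU}^{2}+(1-l_{RU})^{2}\bigr)\le 0$, so $f$ is concave. Setting $f'(x)=0$ gives
\begin{equation}
\rho(O^{(V)}-O^{(H)})+\rho^{2}O^{(H)}O^{(V)}\bigl(l_{RU}^{2}+(1-l_{RU})^{2}\bigr)(1-2x)=0,
\end{equation}
whose unique solution is exactly the $\lambda_{0}$ in (\ref{optimal_power_allocation_lambda_0}).

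Because $f$ is concave on $\mathbb{R}$ and the feasible set is the closed interval $[0,1]$, the constrained maximizer is the projection of $\lambda_{0}$ onto $[0,1]$: equal to $\lambda_{0}$ when $\lambda_{0}\in(0,1)$, clipped to $1$ when $\lambda_{0}>1$ (since $f$ is then increasing on $[0,1]$), and clipped to $0$ when $\lambda_{0}<0$ (since $f$ is then decreasing on $[0,1]$). This gives (\ref{optimal_power_allocation_V}); (\ref{optimal_power_allocation_H}) then follows from the active power constraint. Substituting $(\lambda^{(V)})^{*}$ and $(\lambda^{(H)})^{*}$ back into (\ref{C_R_ub_max}) produces (\ref{C_R_ub_max_max}).

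The derivation is essentially routine once the quadratic structure is exposed; no step is genuinely hard. The only point needing a brief justification is the tightness of the constraint $\lambda^{(V)}+\lambda^{(H)}\le 1$, which I would handle before reducing to one variable by noting that $C_{dual}^{ub}$ in (\ref{C_R_ub_max}) is strictly increasing in each of $\lambda^{(V)}$ and $\lambda^{(H)}$ individually (all coefficients $\rho$, $O^{(V)}$, $O^{(H)}$, and $l_{RU}^{2}+(1-l_{RU})^{2}$ are nonnegative), so any interior slack could be removed to strictly increase the objective. This guarantees that problem (\ref{power_allocation_polarization}) and its reduced form (\ref{power_allocation_polarization_v2}) have the same optimum.
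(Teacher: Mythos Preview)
Your proposal is correct and follows essentially the same approach as the paper: the paper first argues that the power constraint is active (monotonicity in each $\lambda^{(i)}$), reduces to the univariate problem (\ref{power_allocation_polarization_v2}), and then states that the result follows by studying the derivative of $C_{dual}^{ub}$ with respect to $\lambda^{(V)}$, omitting the routine algebra. Your write-up supplies exactly those omitted details (concavity of the quadratic, the stationarity condition yielding $\lambda_0$, and projection onto $[0,1]$), so the two arguments coincide.
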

\begin{proof}
	Note that problem (\ref{power_allocation_polarization_v2}) is a univariate optimization problem with respect to $\lambda^{(V)}$. Therefore, by studying the derivative of $C_{dual}^{ub}$ with respect to $\lambda^{(V)}$, we can acquire the optimal power allocation in (\ref{optimal_power_allocation_V}) and (\ref{optimal_power_allocation_H}). Due to space limitations, this straightforward derivation is omitted here.
\end{proof}
Based on (\ref{optimal_power_allocation_V}) and (\ref{optimal_power_allocation_H}), we have the following remarks concerning the optimal power allocations across polarizations.
\begin{remark}
	In contrast to conventional massive MIMO systems utilizing dual-polarized antennas, where transmit power is uniformly distributed across polarizations, the optimal power allocation for the dual-polarized RIS-aided system varies between polarizations. This variation arises from the differences in reflection amplitudes associated with different polarizations due to the angle-dependence of the RIS elements, as referenced in (\ref{angle_dependence})-(\ref{unequal_amplitude_polarization}). Moreover, based on (\ref{O_V}) and (\ref{O_H}), the BS should allocate more power to the polarization direction corresponding to larger average reflection amplitudes of the RIS elements.
\end{remark}
\begin{remark}
	\label{remark_rho_N}
As the SNR $\rho$ approaches infinity, the transmit power tends to be evenly distributed between different polarizations.
\end{remark}
%
%
%

Given the optimal power allocations, the influence of XPD, a new parameter for the dual-polarized RIS-based channel, can be analyzed, as indicated in the following theorem.
\begin{theorem}
	\label{theorem_XPD}
	As the cross-polarization coefficient $l_{RU}$ given in (\ref{XPD_RRS_2_UE}) increases from $0$ to $1$ (i.e., XPD changes from $+\infty$ to $0$), the system capacity $C_{dual}^{ub}$ first degrades and then becomes larger. 
	
	Moreover, $C_{dual}^{ub}$ is maximized when $l_{RU}=0$ or $l_{RU}=1$ (i.e., XPD$=+\infty$ or XPD$=0$) while it is minimized at $l_{RU}=\frac{1}{2}$ (i.e., XPD$=1$).
\end{theorem}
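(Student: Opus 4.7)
The plan is to reduce the entire dependence on $l_{RU}$ to a single univariate auxiliary function and then apply elementary monotonicity. Define
\[ f(l_{RU}) := l_{RU}^2 + (1-l_{RU})^2. \]
Inspecting (\ref{C_R_ub_max_max}) and (\ref{optimal_power_allocation_lambda_0}), I would first observe that $l_{RU}$ enters the capacity upper bound exclusively through $f$: directly via the coefficient of the $\rho^2$ term, and implicitly via $\lambda_0$. Writing $f(l_{RU}) = 2l_{RU}^2 - 2l_{RU} + 1$, a one-line computation gives $f'(l_{RU}) = 4l_{RU} - 2$, so $f$ is strictly decreasing on $[0,1/2]$ and strictly increasing on $[1/2,1]$, with minimum $f(1/2) = 1/2$ and maxima $f(0) = f(1) = 1$.

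The main step is to show that $C_{dual}^{ub}$ is a non-decreasing function of $f$. My preferred route is an envelope argument applied to (\ref{C_R_ub_max}) \emph{before} plugging in the optimal allocation: for any fixed $\lambda^{(V)} \in [0,1]$, the argument of $\log_2$ in (\ref{C_R_ub_max}) is affine in $f$ with non-negative slope $\rho^2 \lambda^{(V)}(1-\lambda^{(V)}) O^{(V)} O^{(H)}$, hence non-decreasing in $f$; the pointwise maximum over $\lambda^{(V)}$ preserves monotonicity, and composition with the monotone $\log_2$ gives the claim. As a sanity check, one may also substitute $\lambda^{(V)} = \lambda_0$ explicitly and verify that the argument of $\log_2$ reduces to
\[ 1 + \frac{\rho}{2}\bigl(O^{(V)} + O^{(H)}\bigr) + \frac{(O^{(V)} - O^{(H)})^2}{4 f\, O^{(V)} O^{(H)}} + \frac{\rho^2 O^{(V)} O^{(H)} f}{4}, \]
whose derivative in $f$ has the form $-b/f^2 + c$ with $b, c \geq 0$ and which matches continuously with the constant value $\log_2(1 + \rho \max(O^{(V)}, O^{(H)}))$ taken on the corner regime where $\lambda_0 \notin (0,1)$.

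Composing the two steps then yields the theorem: since $C_{dual}^{ub}$ is non-decreasing in $f$ and $f$ is strictly decreasing then strictly increasing around $l_{RU} = 1/2$, the capacity upper bound is non-increasing on $[0,1/2]$ and non-decreasing on $[1/2,1]$, so it is minimized at $l_{RU} = 1/2$ and maximized at $l_{RU} \in \{0,1\}$, as claimed. The one subtlety to be careful with is the corner regime in which the unconstrained $\lambda_0$ leaves $[0,1]$: there the $\rho^2$ term in (\ref{C_R_ub_max_max}) vanishes and $C_{dual}^{ub}$ is locally constant in $f$. The envelope argument sidesteps this automatically, whereas a purely calculus-based proof instead has to patch the two regimes together and verify continuity at the switching threshold $f = |O^{(V)} - O^{(H)}|/(\rho O^{(V)} O^{(H)})$; this bookkeeping is the only mildly non-trivial piece, and it is the reason I would favor the envelope route.
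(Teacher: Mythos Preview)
Your proposal is correct and follows essentially the same route as the paper: both arguments reduce the $l_{RU}$-dependence of $C_{dual}^{ub}$ to the auxiliary quadratic $f(l_{RU})=l_{RU}^2+(1-l_{RU})^2$, compute $f'(l_{RU})=4l_{RU}-2$, and read off the first-decreasing-then-increasing behaviour with extrema at $l_{RU}\in\{0,1\}$ and $l_{RU}=1/2$. The only difference is that the paper simply asserts that $C_{dual}^{ub}$ is increasing in $f$, whereas you supply an actual justification via the envelope argument (and a matching explicit check in the interior regime), which cleanly handles the fact that the optimal allocation $\lambda_0$ itself depends on $f$; your treatment of the corner regime and the continuity at the switching threshold $f=|O^{(V)}-O^{(H)}|/(\rho O^{(V)}O^{(H)})$ is a genuine addition of rigour over the paper's proof.
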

\begin{proof}
	A proof of these results can be found in Appendix~\ref{appendix_XPD}. Here, we give a physical explanation in order to provide some insight.
	
	Note that when $l_{RU}=0$, the polarization of the transmitted signal will not change when it propagates from the feed to the UE. Alternatively, $l_{RU}=1$ corresponds to the case where the transmitted signal by the feed will be completely converted to the orthogonal polarization when it is received by the UE. Therefore, in both cases, there is no cross-polarization interference at the UE, and the dual-polarized channel can be decomposed to two parallel and independent subchannels. On the other hand, when $0<l_{RU}<1$, a part of the transmit power is preserved in the original polarization while the rest is coupled to the orthogonal polarization, and thus cross-polarization always exists in the received signals, leading to degraded system capacity. Therefore, the system capacity can be maximized at $l_{RU}=0$ and $l_{RU}=1$.
	
	Define $x$ as a transmitted signal in the $V$- or $H$- polarization. Also, use $y_{C}$ and $y_{X}$ to represent the induced co-polarized and cross-polarized received signals, respectively. When $l_{RU}\neq \frac{1}{2}$, the power of $y_{C}$ is unequal to that of $y_{X}$ according to (\ref{XPD_RRS_2_UE}). To this end, we can always regard the component with higher power, i.e., $\arg\max\{\|y_{C}\|^2,\|y_{X}\|^2\}$, as the desired signal, and the other as cross-polarization interference, which indicates that the desired signal is stronger than the interference. However, when $l_{RU}=\frac{1}{2}$, the received signals $y_{C}$ and $y_{X}$ have the same strength, which indicates that the desired signal associated with $x$ has the same strength as the interference caused by $x$. Therefore, when $l_{RU}=\frac{1}{2}$, the system capacity achieves its minimum value.
\end{proof}


\vspace{0.2cm}

\section{Comparison with Single-Polarized RIS-Aided System}
\label{sec_cmp_single_polarized}
Before comparing the performance of the single-polarized RIS-aided system against that of the dual-polarized system, we first derive the ergodic capacity of the single-polarized system and an upper bound on it, which is then maximized by optimizing the phase shift configuration of the single-polarized RIS, as shown in the following lemma. Without loss of generality, we assume that the feed, the RIS and the UE antenna in the single-polarized system are all $V$-polarized while other system settings are the same as those of the considered dual-polarized system. 
\begin{lemma}
	\label{lemma_single_polarization}
	The ergodic capacity of the single-polarized RIS-aided systems is given by
	\begin{align}
	\label{ergodic_capacity_single}
	C_{single}=\mathbb{E}\{\log_2(1+\rho|G_{1,1}|^2)\},
	\end{align}
	where the definition of $G_{1,1}$ can be found in (\ref{G_11}). 
\end{lemma}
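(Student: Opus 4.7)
The plan is to specialize the general received signal model in \eqref{rec_sig} to the single-polarized setting described in the lemma. First, I would observe that when the feed, the RIS elements, and the UE antenna are all $V$-polarized, only the $V$-to-$V$ entries of the matrices $\bm{B}$, $\bm{\Gamma}$, and $\bm{H}$ are active. In particular, the transmit signal vector $\bm{x}\in\mathbb{C}^{2\times 1}$ collapses to a scalar $x$, the RIS reflection matrix reduces to the single diagonal block $\bm{\Gamma}^{(V)}$, and the feed-to-RIS and RIS-to-UE matrices reduce to $\bm{b}^{(VV)}$ and $\bm{h}^{(VV)}$, respectively. Consequently, the received signal at the UE becomes a scalar $y = G_{1,1}\,x + n$, where $G_{1,1}=\bm{h}^{(VV)}\bm{\Gamma}^{(V)}\bm{b}^{(VV)}$ as defined in \eqref{G_11} and the noise term reduces to a scalar AWGN.

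Second, I would invoke the standard ergodic capacity formula for a scalar fading channel with CSI unknown at the transmitter. Since all available transmit power is necessarily allocated to the single active polarization (i.e., the power allocation matrix $\bm{\Lambda}$ reduces to the scalar $1$), substituting the SISO gain $G_{1,1}$ into the general log-det expression \eqref{ergodic_capacity} yields the scalar form $\log_2(1+\rho|G_{1,1}|^2)$, using the fact that the determinant of a $1\times 1$ matrix equals its sole entry. Taking the expectation over the channel randomness then gives $C_{single}=\mathbb{E}\{\log_2(1+\rho|G_{1,1}|^2)\}$, as stated.

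The lemma is essentially an unpacking of notation, so no genuine technical obstacle arises at this step. The real work, foreshadowed by the lead-in paragraph, lies in the subsequent derivations: obtaining a tractable upper bound on $C_{single}$ analogous to Theorem~\ref{theorem_upper_bound} and then optimizing the single-polarized RIS phase shifts. For the bound, I would apply Jensen's inequality to move the expectation inside the logarithm and evaluate $\mathbb{E}\{|G_{1,1}|^2\}$ using the spatial correlation matrix $\bm{R}$ in \eqref{correlation} together with the deterministic reflection coefficients $\bm{\Gamma}^{(V)}$ and propagation vector $\bm{b}^{(VV)}$. For the phase-shift optimization, I would mirror the argument of Theorem~\ref{the_opt_phase} and align the reflected phases as $\varphi_n^{(V)}=\frac{2\pi}{\lambda}D_n$, which co-phases the contributions of all RIS elements and maximizes the resulting upper bound, producing an expression directly comparable to \eqref{C_R_ub_max} in the later polarization comparison.
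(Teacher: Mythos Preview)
Your proposal is correct and follows essentially the same approach as the paper: specialize the received signal model \eqref{rec_sig} to the all-$V$-polarized setting to obtain the scalar channel $y=\bm{h}^{(VV)}\bm{\Gamma}^{(V)}\bm{b}^{(VV)}x^{(V)}+n=G_{1,1}x+n$, then read off the ergodic capacity formula. The subsequent steps you outline (Jensen's inequality for the upper bound and the phase-alignment $\varphi_n^{(V)}=\frac{2\pi}{\lambda}D_n$) also match the paper exactly.
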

\begin{proof}
	See Appendix~\ref{appendix_single_polarization}.
\end{proof}
Based on (\ref{ergodic_capacity_single}) and (\ref{ergodic_capacity}), we have the following theorem on the asymptotic performance of the single-polarized and the dual-polarized RIS-aided system. 
\begin{theorem}
	\label{theorem_multiplexing_gain_cmp}
	The multiplexing gain of the dual-polarized system doubles that of the single-polarized system, i.e.,
	\begin{align}
	\label{multiplexing_gain_dual}
	\lim_{\rho\rightarrow +\infty}\frac{C_{dual}}{\log_2\rho}=2,\\
	\label{multiplexing_gain_single}
	\lim_{\rho\rightarrow +\infty}\frac{C_{single}}{\log_2\rho}=1.
	\end{align}
\end{theorem}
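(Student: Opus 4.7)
The plan is to extract the leading $\log_2\rho$ term from each capacity expression as $\rho\to\infty$ and show that the remainder is a random variable with finite expectation, so that division by $\log_2\rho$ and the limit pass inside the expectation.

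For the single-polarized case, I would start from the identity
\begin{align*}
\log_2(1+\rho|G_{1,1}|^2)=\log_2\rho+\log_2\!\left(|G_{1,1}|^2+\tfrac{1}{\rho}\right),
\end{align*}
which is valid for every $\rho>0$. Dividing by $\log_2\rho$, taking expectations, and noting that $G_{1,1}=\bm{h}^{(VV)}\bm{\Gamma}^{(V)}\bm{b}^{(VV)}$ is (conditionally on $\bm{\Gamma}^{(V)}\bm{b}^{(VV)}\neq\bm{0}$) a complex Gaussian scalar with nonzero variance by the Rayleigh-fading model in (\ref{model_RRS_2_UE}), one has $|G_{1,1}|^2>0$ almost surely and its log is integrable (a Gaussian magnitude has finite logarithmic moments of both signs). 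A routine dominated/monotone convergence argument then yields $\mathbb{E}\{\log_2(|G_{1,1}|^2+1/\rho)\}/\log_2\rho\to 0$, which proves (\ref{multiplexing_gain_single}).

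For the dual-polarized case, I would diagonalize the Hermitian matrix $\bm{G}\bm{\Lambda}\bm{G}^{\dagger}$ and write its eigenvalues as $\mu_1(\rho),\mu_2(\rho)$; since $\bm{\Lambda}$ is deterministic the eigenvalues do not depend on $\rho$, so write them simply $\mu_1,\mu_2$. Then
\begin{align*}
\log_2\det(\bm{I}_2+\rho\bm{G}\bm{\Lambda}\bm{G}^{\dagger})=\log_2(1+\rho\mu_1)+\log_2(1+\rho\mu_2).
\end{align*}
The key substep is to show $\mu_1,\mu_2>0$ almost surely, i.e.\ that $\bm{G}$ has full rank $2$ and that the power allocation matrix is nonsingular. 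For the former, observe from (\ref{G_11})--(\ref{G_22}) that the four entries of $\bm{G}$ are linear combinations of the jointly Gaussian Rayleigh fading variables $\widetilde{h}_n^{(ji)}$ in (\ref{model_RRS_2_UE}); under the zero cross-polarization correlations stated later in Section~\ref{channel_model} and with nonzero $\bm{\Gamma}^{(V)}\bm{b}^{(VV)}$ and $\bm{\Gamma}^{(H)}\bm{b}^{(HH)}$, $\det\bm{G}$ is a nondegenerate complex Gaussian mixture and hence nonzero with probability one. For the latter, the multiplexing gain is defined by the maximum of $C_{dual}$ over admissible $(\lambda^{(V)},\lambda^{(H)})$, so one may simply fix $\lambda^{(V)}=\lambda^{(H)}=1/2$ (consistent with Remark~\ref{remark_rho_N}) and argue that this choice is asymptotically optimal.

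With $\mu_1,\mu_2>0$ a.s., apply the same expansion as in the single-polarized case to each factor:
\begin{align*}
\log_2\det(\bm{I}_2+\rho\bm{G}\bm{\Lambda}\bm{G}^{\dagger})=2\log_2\rho+\log_2\!\left(\mu_1+\tfrac{1}{\rho}\right)+\log_2\!\left(\mu_2+\tfrac{1}{\rho}\right).
\end{align*}
Dividing by $\log_2\rho$ and taking expectations gives (\ref{multiplexing_gain_dual}), provided $\mathbb{E}\{|\log_2\mu_i|\}<\infty$ for $i=1,2$. Since $\mu_1\mu_2=\det(\bm{\Lambda})\,|\det\bm{G}|^2$ and $\mu_1+\mu_2=\mathrm{Tr}(\bm{G}\bm{\Lambda}\bm{G}^{\dagger})$ are polynomial functions of jointly Gaussian entries, these log-moments are finite by standard Gaussian tail estimates, and dominated convergence completes the argument.

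The main technical obstacle is the lower-tail control: showing that $|\det\bm{G}|$ does not concentrate too near zero, so that $\mathbb{E}\{\log_2(\mu_1\mu_2)\}>-\infty$. I would handle this by noting that, conditionally on the feed-to-RIS propagation, $\det\bm{G}$ is a complex-Gaussian-type quadratic in the Rayleigh variables whose probability density is bounded near the origin, which gives a logarithmic integrability bound. Every other step is a routine application of the asymptotic expansion $\log_2(1+\rho x)=\log_2\rho+o(\log_2\rho)$ for $x>0$.
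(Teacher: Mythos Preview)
Your proposal is correct and arrives at the same conclusion as the paper, but the decomposition you choose is slightly different and your treatment of the limit is more careful. The paper's proof (Appendix~\ref{appendix_multiplexing_gain_cmp}) simply expands the $2\times 2$ determinant directly as a quadratic polynomial in $\rho$,
\[
\log_2\det\!\left(\bm{I}_2+\tfrac{\rho}{2}\bm{G}\bm{G}^{\dagger}\right)=\log_2\!\left(1+\tfrac{\rho}{2}(A_{1,1}+A_{2,2})+\tfrac{\rho^2}{4}\det\bm{A}\right),
\]
observes that the entries of $\bm{A}$ are independent of $\rho$, reads off the exponent $2$, and then swaps the limit and expectation without further comment. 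Your route via the eigenvalues $\mu_1,\mu_2$ of $\bm{G}\bm{\Lambda}\bm{G}^{\dagger}$ is algebraically equivalent (since $\det(\bm{I}_2+\rho\bm{A})=\prod_i(1+\rho\mu_i)$), but by isolating each factor you are able to invoke dominated convergence explicitly, and you correctly identify the one genuine technical point both proofs need: that $\det\bm{G}\neq 0$ almost surely so the $\rho^2$ coefficient (equivalently $\mu_1\mu_2$) does not vanish. The paper leaves that point implicit; your sketch of why $|\det\bm{G}|$ has bounded density near zero and hence finite negative log-moment is the extra ingredient that makes the interchange of limit and expectation rigorous.
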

\begin{proof}
	See Appendix~\ref{appendix_multiplexing_gain_cmp}.
\end{proof}
\begin{lemma}
	Similar to the dual-polarized RIS-aided system, by applying Jensen's inequality, an upper bound on the ergodic capacity $C_{single}$ of the single-polarized system can be derived, i.e.,
	\begin{align}
	\label{ergodic_capacity_single_upper_bound}
	C_{single}&\le \log_2\mathbb{E}\{1+\rho|G_{1,1}|^2\}\notag\\
	&=\log_2\left(1+\rho\mathbb{E}\{|G_{1,1}|^2\}\right)\notag\\
	&\triangleq C_{single}^{ub}.
	\end{align}
	Similar to (\ref{opt_phase}), by setting the phase shifts of the RIS elements as
	\begin{align}
	\label{opt_phase_single}
	\varphi_n^{(V)}=-\angle \hat{b}_n=\frac{2\pi}{\lambda}D_n,
	\end{align}
	the capacity upper bound can be maximized as
	\begin{align}
	\label{ergodic_capacity_single_upper_bound_opt_phase}
C_{single}^{ub}=\log_2\left(1+\rho(1-l_{RU})O^{(V)}\right).
	\end{align}
\end{lemma}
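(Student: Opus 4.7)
The plan is to handle the lemma in two stages: first establish the Jensen upper bound on $C_{single}$, then specialize the phase-alignment computation from Theorem~\ref{the_opt_phase} to the single-polarized setting in order to obtain the closed-form expression (\ref{ergodic_capacity_single_upper_bound_opt_phase}).

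For the inequality $C_{single}\le\log_2(1+\rho\mathbb{E}\{|G_{1,1}|^2\})$, I would apply Jensen's inequality to the concave map $x\mapsto\log_2(1+\rho x)$ evaluated at the non-negative random variable $|G_{1,1}|^2$. This step is completely parallel to the derivation of $C_{dual}^{ub}$ in Theorem~\ref{theorem_upper_bound}, only simpler because the single-polarized channel is scalar rather than matrix-valued, so no determinant expansion is needed.

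To evaluate $\mathbb{E}\{|G_{1,1}|^2\}$ under the phase choice (\ref{opt_phase_single}), I would expand $|G_{1,1}|^2$ from (\ref{G_11}) as the double sum
\[
|G_{1,1}|^2=\sum_{n_1,n_2}h_{n_1}^{(VV)}(h_{n_2}^{(VV)})^{*}\,\Gamma_{n_1}^{(V)}(\Gamma_{n_2}^{(V)})^{*}\,b_{n_1}^{(VV)}(b_{n_2}^{(VV)})^{*},
\]
and pass the expectation inside using the Rayleigh model (\ref{model_RRS_2_UE}), the spatial correlation (\ref{correlation}), and the pathloss $\beta_n^{(VV)}=\beta_0 d_n^{-\alpha}(1-l_{RU})$, which yields $\mathbb{E}\{h_{n_1}^{(VV)}(h_{n_2}^{(VV)})^{*}\}=\sqrt{\beta_{n_1}^{(VV)}\beta_{n_2}^{(VV)}}\,\bm{R}(n_1,n_2)$. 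Substituting $\Gamma_n^{(V)}=A_n^{(V)}e^{j\varphi_n^{(V)}}$ and the propagation factorization $b_n^{(VV)}=e^{j\varphi^{(VV)}}|\hat{b}_n|e^{-j(2\pi/\lambda)D_n}$ from (\ref{propagation_effect}), each summand carries the residual phase factor
\[
\exp\!\Bigl(j\bigl[(\varphi_{n_1}^{(V)}-\tfrac{2\pi}{\lambda}D_{n_1})-(\varphi_{n_2}^{(V)}-\tfrac{2\pi}{\lambda}D_{n_2})\bigr]\Bigr),
\]
while the global $e^{\pm j\varphi^{(VV)}}$ cancels pointwise. The choice (\ref{opt_phase_single}) drives every residual phase to unity, so that pulling the common factor $(1-l_{RU})$ out of the resulting real double sum reproduces the definition of $O^{(V)}$ in (\ref{O_V}), giving $\mathbb{E}\{|G_{1,1}|^2\}=(1-l_{RU})O^{(V)}$ and hence (\ref{ergodic_capacity_single_upper_bound_opt_phase}).

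To confirm that (\ref{opt_phase_single}) actually maximizes the upper bound, I would invoke the scalar version of the argument from Appendix~\ref{app_opt_phase}: since $C_{single}^{ub}$ is strictly increasing in $\mathbb{E}\{|G_{1,1}|^2\}$, the phase optimization reduces to maximizing a weighted coherent sum of unit-modulus exponentials whose weights inherit their positive structure from the feed pattern and pathloss, and agreement of the residual phases is the optimal alignment. The main obstacle is keeping the phase bookkeeping for $\hat{b}_n$ consistent; but since the whole computation is the $V$-polarized restriction of the one already carried out for Theorem~\ref{the_opt_phase}, no new technical ingredient is required beyond that reduction.
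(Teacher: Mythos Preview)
Your proposal is correct and follows essentially the same approach as the paper, which simply states that the Jensen bound is obtained ``similarly'' to the dual-polarized case and that the phase-shift optimization proceeds ``similar to the proof of Theorem~\ref{the_opt_phase}.'' You have written out the details the paper leaves implicit, but the underlying argument---Jensen on the concave log, double-sum expansion of $|G_{1,1}|^2$, phase alignment via (\ref{opt_phase_single}), and identification of $(1-l_{RU})O^{(V)}$---is identical to the $V$-polarized restriction of Appendix~\ref{app_opt_phase}.
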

In the following, we will compare the capacity upper bound on the single-polarized system in (\ref{ergodic_capacity_single_upper_bound_opt_phase}) against that of the dual-polarized system in (\ref{C_R_ub_max}) given the XPD. To simplify the discussion, we first define several channel conditions based on the relationship between the XPD associated with the channel from the RIS to the UE and the polarization of the single-polarized RIS-aided system.
\begin{definition}
	For the single-polarized RIS-aided system, we say the channel from the RIS to the UE \textbf{matches} the polarization of the system if when the EM wave reflected by the RIS propagates to the vicinity of the single-polarized UE antenna, the polarization of the EM wave is consistent with that of the UE antenna with weak cross-polarization components. Since both the RIS and the UE antenna in the considered RIS-aided system are $V$-polarized, this corresponds to the case where $l_{RU}\rightarrow 0$ (i.e., XPD$\rightarrow \infty$). Especially, if the polarization of the EM wave is exactly the same as that of the UE antenna, the channel \textbf{perfectly matches} the polarization of the RIS-aided system, i.e., $l_{RU}=0$.
	
	On the contrary, if the polarization of the received EM wave is approximately orthogonal to that of the UE antenna, and thus the received signal is weak, we say the channel from the RIS to the UE \textbf{mismatches} the RIS-aided system, which corresponds to $l_{RU} \rightarrow 1$ (i.e., XPD$\rightarrow 0$). Further, we refer to the case where $l_{RU}=1$ as the RIS-based channel \textbf{absolutely mismatches} the single-polarized RIS-aided system.
\end{definition}
The comparison between the single-polarized and dual-polarized systems are performed under these channel conditions, respectively, as shown in the following remarks.
\begin{remark}
	\label{remark_single_polarized_2}
	When the channel absolutely mismatches the polarization of the single-polarized system, the average received SNR of the single-polarized system is minimized as $0$. Differently, the minimum received SNR of the dual-polarized system is non-zero, which demonstrates that the dual-polarized system is more robust to the wireless propagation environment than its single-polarized counterpart.
\end{remark}
\begin{remark}
	\label{remark_single_polarized_1}
	The capacity upper bound on the single-polarized system changes monotonically with the XPD. Besides, when the RIS-aided channel perfectly matches the polarization of the single-polarized system, the capacity upper bound $C_{single}^{ub}$ is maximized, while the upper bound is minimized when the channel absolutely mismatches the polarization of the system, which is different from the dual-polarized system indicated in Theorem~\ref{theorem_XPD}.
\end{remark}
\begin{remark}
	\label{remark_single_polarized_3}
	When the channel mismatches the polarization of the RIS-aided system, the system capacity achieved by the dual-polarized system is more than twice that of the single-polarized one. Moreover, a closed-form threshold $l_{RU}^{(th)}$ for the polarization parameter $l_{RU}$ can be derived, i.e., when $l_{RU}>l_{RU}^{(th)}$, we have $C_{dual}^{ub}>2C_{single}^{ub}$. An expression for $l_{RU}^{(th)}$ can be given by
	\begin{align}
		\label{threshold_XPD}
		l_{RU}^{(th)}=\frac{-b+\sqrt{b^2-4ac}}{2a},
	\end{align}
	where
	\begin{align}
		a=\rho^2O^{(V)}\left(\frac{1}{2}O^{(H)}-O^{(V)}\right),
	\end{align}
	\begin{align}
		b=\rho^2O^{(V)}\left(2O^{(V)}-\frac{1}{2}O^{(H)}\right)+2\rho O^{(V)},
	\end{align}
	and
	\begin{align}
		c=\rho^2O^{(V)}\left(\frac{1}{4}O^{(H)}-O^{(V)}\right)+\rho\left(\frac{1}{2}O^{(H)}-\frac{3}{2}O^{(V)}\right),
	\end{align}
\end{remark}	
\begin{proof}
	See Appendix~\ref{app_single_polarized_3}.
\end{proof}
It is easy to generalize Remarks~\ref{remark_single_polarized_2}-\ref{remark_single_polarized_3} to other single-polarized RIS-aided systems, namely an RIS-aided system with a $V$-polarized feed and $H$-polarized UE antenna, an RIS-aided system with an $H$-polarized feed and $H$-polarized UE antenna, and an RIS-aided system with an $H$-polarized feed and $V$-polarized UE antenna.
\section{Simulation Results}
\label{sec_simulation}
To validate the theoretical analysis, we numerically evaluate the performance of the dual-polarized RIS-aided communication system via simulations. The simulation parameters are set up based on the existing works~\cite{Zhang_IOS_2022,Zeng_holographic_multi_TWC_2023,yu_dual_polarized_2022}. Specifically, until further mentioned, the coordinates of the feed are set as $(r_F,\theta_F,\phi_F)=(0.05~m, \frac{\pi}{2}, \pi)$, with the orientation of the feed vertical to the RIS, i.e., $(\eta,\beta,\gamma)=(0, \frac{\pi}{2}, \frac{\pi}{2})$. For comparison, a single-polarized RIS-aided system is also considered and evaluated, where the feed, the RIS elements, and the UE antenna are all $V$-polarized\footnote{Throughout the simulation, the parameter $\rho$ indicates the ratio between the transmit power of the BS and noise variance, i.e., $\rho=P/\sigma^2$.}. The other simulation parameters are summarized in Table~\ref{sim_par}. 

\renewcommand\arraystretch{1.4}
\begin{table}[!tpb]
	\footnotesize
	\centering
	\caption{\normalsize{Simulation parameters}}
	\label{sim_par}
	\centering
	\begin{tabular}{|m{150pt}|m{60pt}|}
		\hline	
		\textbf{Parameters} & \textbf{Values} \\
		\hline\hline
		Carrier frequency $f_c$ & $26$~GHz\\
		\hline
		Wavelength $\lambda$ corresponding to the carrier frequency & $1.15$~cm\\
		\hline
		Variance $\sigma^2$ of the received AWGN & $-96$~dBm\\
		\hline
		Pathloss parameter $\beta_0$ & $-49.7$~dB\\
		\hline
		Pathloss exponent $\alpha$ & $4$ \\
		\hline
		Area $s_{R}$ of each RIS element & $\frac{\lambda}{3}\times \frac{\lambda}{3}$\\		
		\hline
		Phase shifts $(\varphi^{(VV)},\varphi^{(HH)})$ induced by the propagation from the feed to the RIS & $(\frac{\pi}{2},\frac{\pi}{4})$\\
		\hline
		Coordinate $(r_U,\theta_U,\phi_U)$ of the UE & $(50~m, \frac{\pi}{3}, 0)$\\
		\hline
	\end{tabular}
\end{table}


\begin{figure}[!tpb]
	\centering
	\vspace{0pt}
	\includegraphics[width=0.43\textwidth]{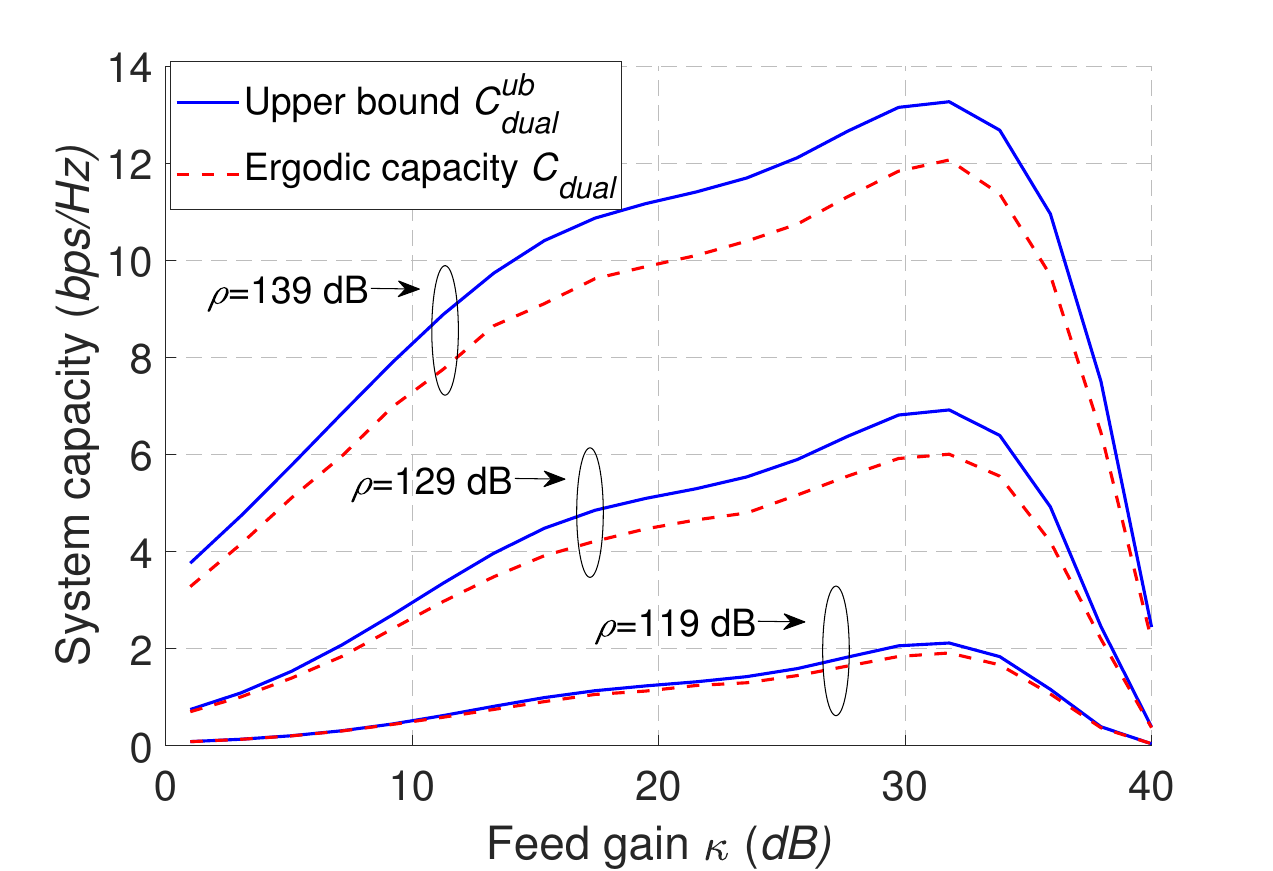}
	\vspace{-0.1cm}
	\caption{System capacity versus the gain $\kappa$ of the feed, with the number of RIS elements $N_R=100$ and cross-polarization coefficient $l_{RU}=0.2$. $\rho$ represents the ratio between transmit power and noise power.}
	\label{fig_capacity_vs_kappa}
\end{figure}

Fig.~\ref{fig_capacity_vs_kappa} depicts how the system capacity changes with the gain $\kappa$ of the feed. The transmit power is equally allocated between the two polarizations. According to Fig.~\ref{fig_capacity_vs_kappa}, it can be found as the feed gain becomes larger, both the ergodic capacity and its upper bound increase first and then degrade, which is consistent with Remark~\ref{remark_the_alpha_distance_on_capacity}. Moreover, we can observe that as the SNR $\rho$ tends to zero, the upper bound is asymptotically tight, which verifies Remark~\ref{remark_asymptotically_tight}. 

\begin{figure}[!tpb]
	\centering
	\vspace{0pt}
	\includegraphics[width=0.43\textwidth]{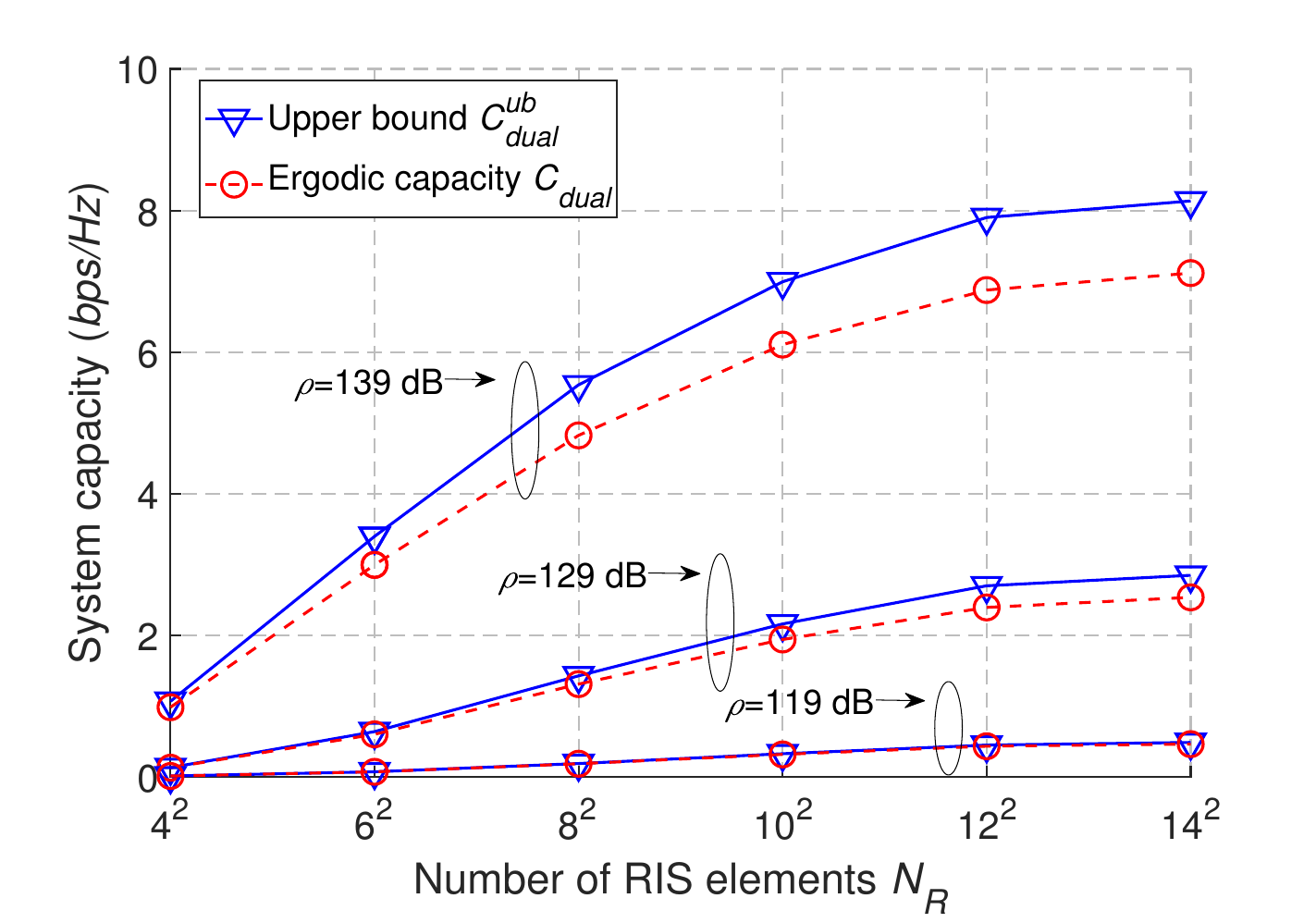}
	\vspace{-0.1cm}
	\caption{System capacity versus the number $N_R$ of RIS elements, with the gain of feed $\kappa=17$~dB and cross-polarization coefficient $l_{RU}=0.2$. $\rho$ represents the ratio between transmit power and noise power.}
	\label{fig_capacity_vs_element_num}
\end{figure}


{Fig.~\ref{fig_capacity_vs_element_num} depicts how the system capacity changes with the number $N_R$ of RIS elements. According to Fig.~\ref{fig_capacity_vs_element_num}, it can be found that for all transmit SNR, both the ergodic capacity and its upper bound increases with the size of the RIS. This is because the RIS can capture more power radiated by the feed, thereby delivering more power to the UE through beamforming. However, as the number of RIS elements continues to increase, both the ergodic capacity and its upper bound eventually saturate. This is because most power radiated by the feed is captured by the RIS, and thus further increasing the size of the RIS has little impact on the power of the received desired signal. Moreover, we can observe that for any given number of RIS elements, the upper bound is asymptotically tight as the SNR $\rho$ tends to zero.}

\begin{figure}[!tpb]
	\centering
	\vspace{0pt}
	\includegraphics[width=0.41\textwidth]{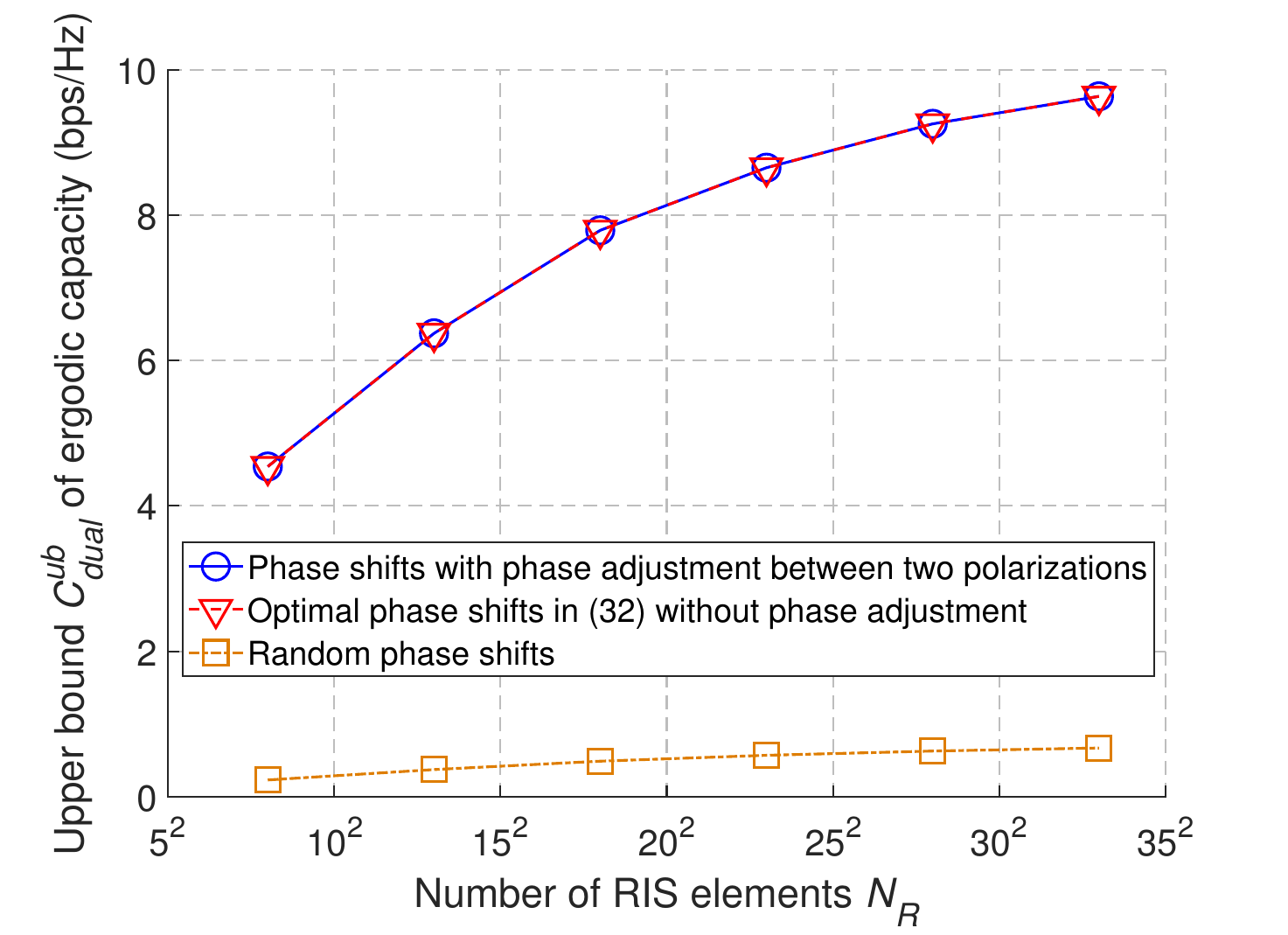}
	\vspace{-0.1cm}
	\caption{Performance comparison among different phase shift configuration schemes, with the transmit power $P=43$~dBm, the gain of the feed $\kappa=10$~dB, and cross-polarization coefficient $l_{RU}=0.2$.}
	\label{fig_capacity_vs_element_num_3_phase_shift_cmp_v2}
\end{figure}

In Fig.~\ref{fig_capacity_vs_element_num_3_phase_shift_cmp_v2}, we compare the system capacity $C_{dual}^{ub}$ achieved by three phase shift configuration schemes. The curve ``Phase shifts with phase adjustment between two polarizations" corresponds to the configuration $\varphi_n^{(V)}=\frac{2\pi}{\lambda}D_n-\varphi^{(VV)}$ and $\varphi_n^{(H)}=\frac{2\pi}{\lambda}D_n-\varphi^{(HH)}$, where the phase adjustment can be calculated as $\varphi_n^{(V)}-\varphi_n^{(H)}=\varphi^{(HH)}-\varphi^{(VV)}$. Further, the power allocation is given, where the power is equally allocated between two polarizations. From Fig.~\ref{fig_capacity_vs_element_num_3_phase_shift_cmp_v2}, we can find that the proposed phase shift configuration in (\ref{opt_phase}) outperforms the random scheme, which thus verifies the optimality of the proposed scheme. Moreover, it can be observed that the system capacity achieved by the optimal phase shifts without phase adjustment is equal to that when the phase adjustment is applied. This shows that the maximization of the system capacity does not require the phase adjustment between two polarizations, as indicated in Remark~\ref{remark_phase_adjustment}. {These results are consistent with those in Fig.~\ref{fig_capacity_vs_element_num}, and thus can also be explained in the same manner as that for Fig.~\ref{fig_capacity_vs_element_num}.}

\begin{figure}[!tpb]
	\centering
	\vspace{0pt}
	\includegraphics[width=0.43\textwidth]{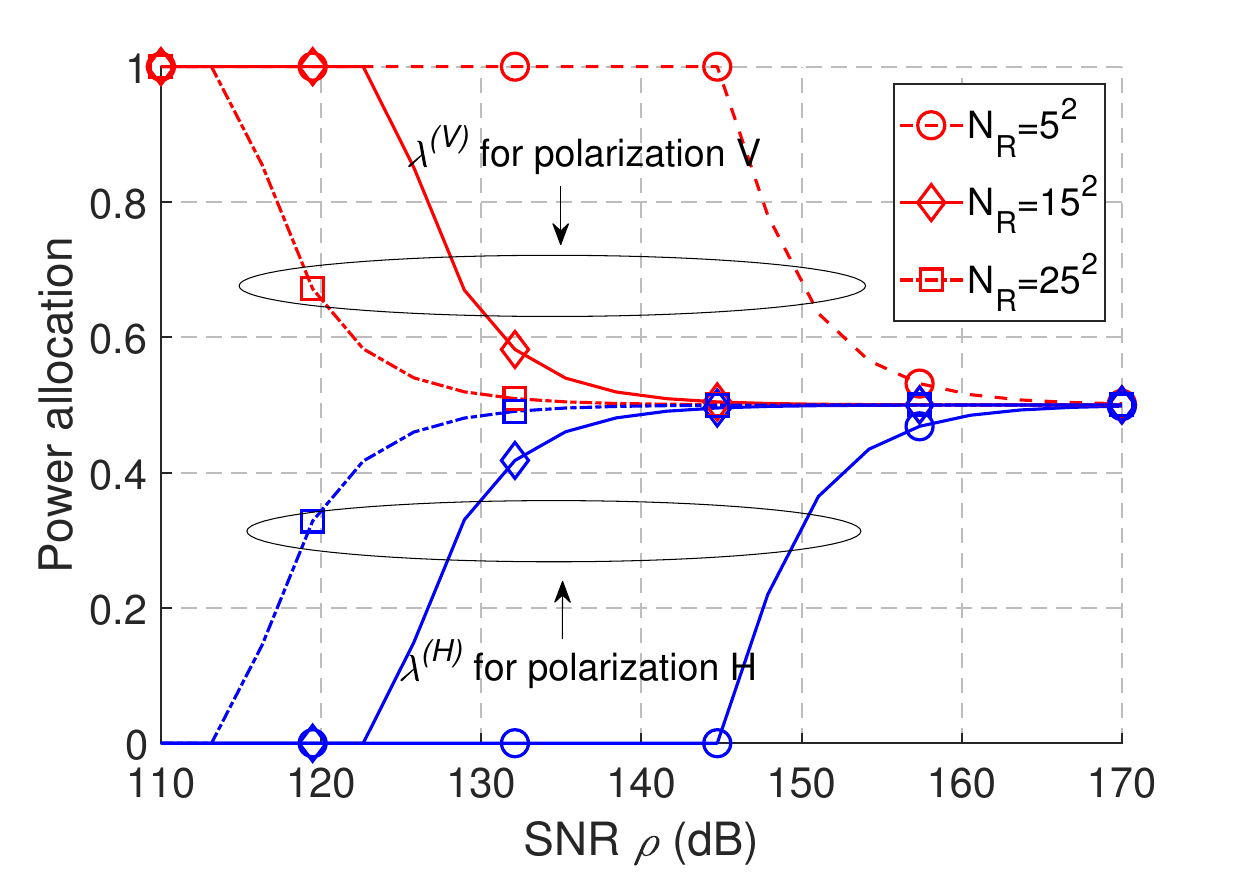}
		\vspace{-0.2cm}
	\caption{Power allocations across polarizations versus signal-to-noise ratio $\rho$, with the gain of the feed $\kappa=10$~dB, cross-polarization coefficient $l_{RU}=0.2$, the distance between the feed and the RIS $r_F=0.1$~m, and the elevation angle of the feed $\theta_F=\frac{\pi}{3}$. {Further, the average reflection amplitude of the RIS elements in the $V$-polarization is larger than that in the $H$-polarization, i.e., $\frac{1}{N_R}\sum_{n=1}^{N_R}A_n^{(V)}>\frac{1}{N_R}\sum_{n=1}^{N_R}A_n^{(H)}$.}}
	\label{fig_power_allocation_rho}
\end{figure}

In Fig.~\ref{fig_power_allocation_rho}, we depict how the power allocations over polarizations change with SNR $\rho$. From Fig.~\ref{fig_power_allocation_rho}, we can find that unlike conventional dual-polarized MIMO where transmit poewr is equally allocated to different polarizations, more power is allocated to polarization $V$ than polarization $H$ in the dual-polarized RIS-aided system, i.e., $\lambda^{(V)}>\lambda^{(H)}$, since the reflection amplitude $\Gamma^{(V)}$ in polarization $V$ is larger than that in polarization $H$. Moreover, it can be observed that when the SNR $\rho$ increases or the RIS contains more elements, the gap between power allocations $\lambda^{(V)}$ and $\lambda^{(H)}$ becomes smaller, which is consistent with Remark~\ref{remark_rho_N}.

\begin{figure}[!tpb]
	\centering
	\vspace{0pt}
	\includegraphics[width=0.43\textwidth]{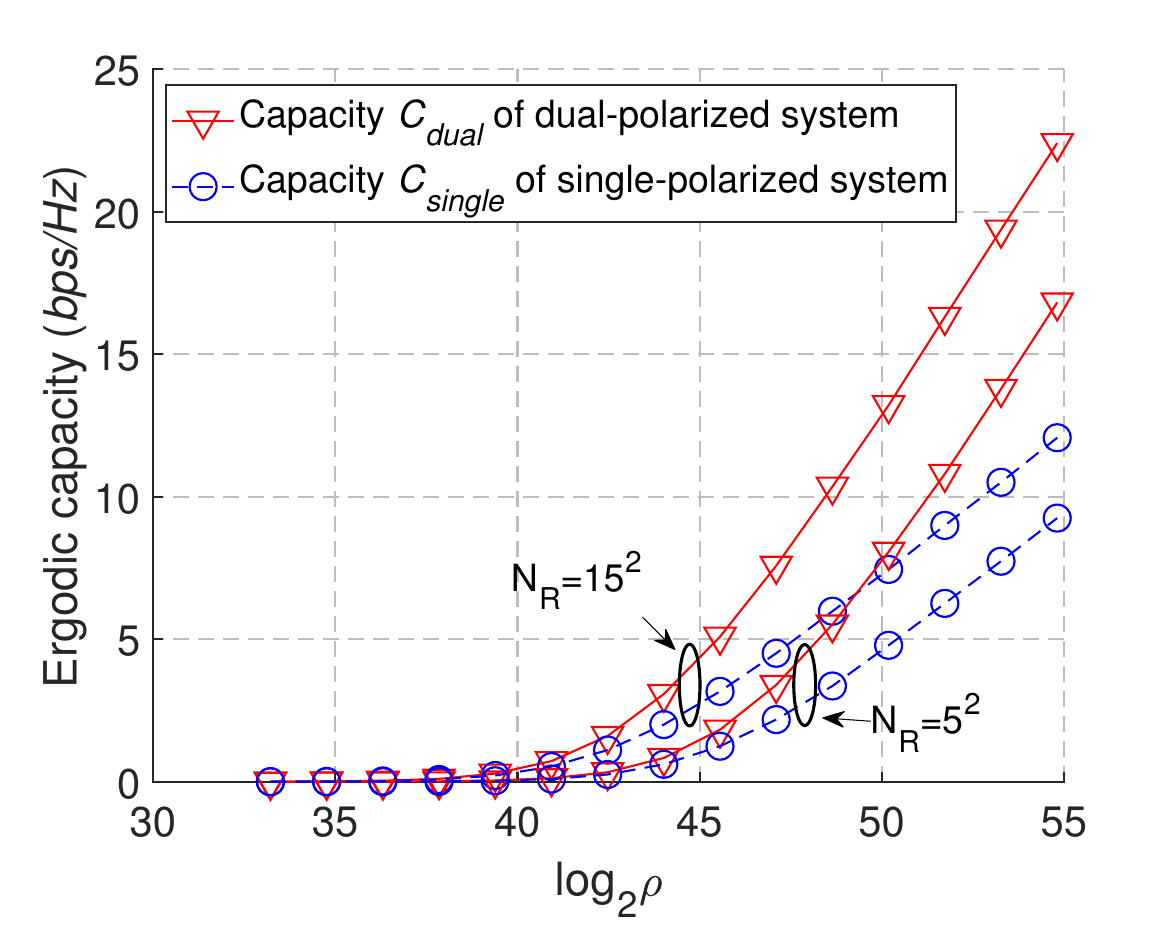}
	\caption{System capacity versus SNR $\rho$, with distance between the feed and the RIS $r_F=0.1$~m, and the gain of the feed $\kappa=10$~dB.}
	\label{fig_capacity_vs_rho_v3}
\end{figure}

Fig.~\ref{fig_capacity_vs_rho_v3} depicts how the ergodic capacity changes with the SNR $\rho$. According to Fig.~\ref{fig_capacity_vs_rho_v3}, the dual-polarized system achieves higher multiplexing gain, i.e., the growth rate of the ergodic capacity with respect to the SNR when the SNR is sufficiently large, is larger than that of the single-polarized system, than the single-polarized one, which is consistent with Theorem~\ref{theorem_multiplexing_gain_cmp}. In addition, we can observe that the ergodic capacity is positively correlated with the number of the RIS elements, since more power radiated by the feed can be captured by the RIS and then be delivered to the UE, leading to higher received SNR.

\begin{figure}[!tpb]
	\centering
	\subfigure[System capacity vs. feed deployment under feed gain $\kappa=10$~dB]{
		\begin{minipage}[b]{0.48\textwidth}
			\centering
			\includegraphics[width=1\textwidth]{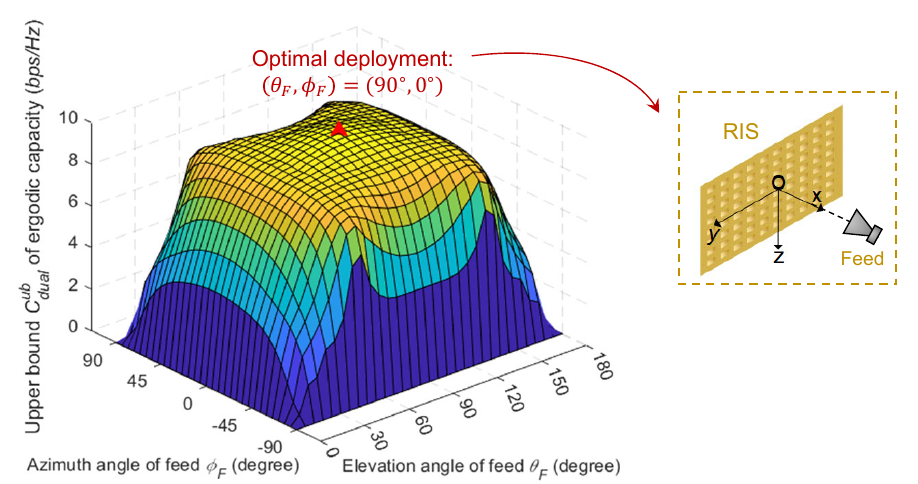}
			\vspace{-0.2cm}
			\label{feed_deployment_10dB}
	\end{minipage}}

	\subfigure[System capacity vs. feed deployment under feed gain $\kappa=20$~dB]{
		\begin{minipage}[b]{0.48\textwidth}
			\centering
			\includegraphics[width=1\textwidth]{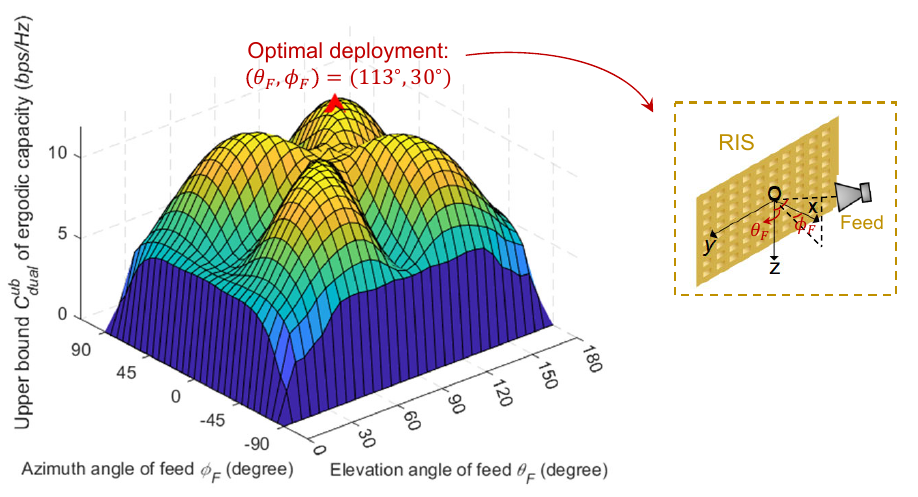}
			\vspace{-0.2cm}
			\label{feed_deployment_20dB}
	\end{minipage}}
	\vspace{-0.2cm}
	\caption{System capacity $C_{dual}^{ub}$ of dual-polarized RIS-based system versus the azimuth angle $\phi_F$ and elevation angle $\theta_F$ of the feed. We set the number of RIS elements as $N_R=400$, transmit power as $P=43$~dBm, and cross-polarization coefficient as $l_{RU}=0.2$. Besides, the main lobe of the feed is directed towards the origin.}
	\label{feed_deployment}
\end{figure}

Note that the azimuth angle $\phi_F$ and the elevation angle $\theta_F$ of the feed are correlated with the incident angles of the RIS elements, and thus they have an influence on the system performance. To this end, in Fig.~\ref{feed_deployment}, we investigate the relationship between the system capacity $C_{dual}^{ub}$ and the feed deployment. From Fig.~\ref{feed_deployment}, we can find that when the gain of the feed is relatively small, the feed should be deployed directly in front of the RIS, i.e., $(\theta_F,\phi_F)=(90^\circ,0^\circ)$, while when the gain of the feed is relatively large, it should be positioned above and at an angle to the RIS. 

In Fig.~\ref{fig_capacitu_vs_XPD_dual_vs_single_v4}, we plot the system capacity upper bound versus cross-polarization coefficient $l_{RU}$. We can observe from Fig.~\ref{fig_capacitu_vs_XPD_dual_vs_single_v4} that as $l_{RU}$ grows larger, the capacity upper bound $C_{dual}^{ub}$ of the dual-polarized system first increases and then decreases. Besides,  $C_{dual}^{ub}$ is maximized at $l_{RU}=1$ and $l_{RU}=0$ while it is minimized when $l_{RU}=\frac{1}{2}$, with the minimum capacity more than half of the maximum capacity, which verifies Theorem~\ref{theorem_XPD}. Unlike the dual-polarized system, the capacity upper bound $C_{single}^{ub}$ for the single-polarized system monotonically decreases with $l_{RU}$. Also, when the channel perfectly matches (i.e., $l_{RU}=0$) and absolutely mismatches (i.e., $l_{RU}=1$) the polarization of the single-polarized system, the capacity upper bound is maximized and minimized, respectively, with the minimum capacity upper bound equal to zero, which is consistent with Remark~\ref{remark_single_polarized_1} and Remark~\ref{remark_single_polarized_2}. Fig.~\ref{fig_capacitu_vs_XPD_dual_vs_single_v4} also shows that the capacity of the dual-polarized system is more than twice that of the single-polarized one when $l_{RU}$ exceeds the derived threshold, which verifies Remark~\ref{remark_single_polarized_3}. {From Fig.~\ref{fig_capacitu_vs_XPD_dual_vs_single_v4}, we can also observe that the ratio term $\frac{C_{R}^{ub}}{C_{single}^{ub}}$ is varying smoothly up to almost $l_{RU}=0.6$, and then increases sharply with $l_{RU}$. This is because as $l_{RU}\rightarrow 1$, the capacity upper bound $C_{single}^{ub}$ of the single-polarized system approaches $0$.}

\begin{figure}[!tpb]
	\centering
	\subfigure[Capacity upper bound vs. cross-polarization coefficient $l_{RU}$]{
		\begin{minipage}[b]{0.48\textwidth}
			\centering
			\includegraphics[width=.95\textwidth]{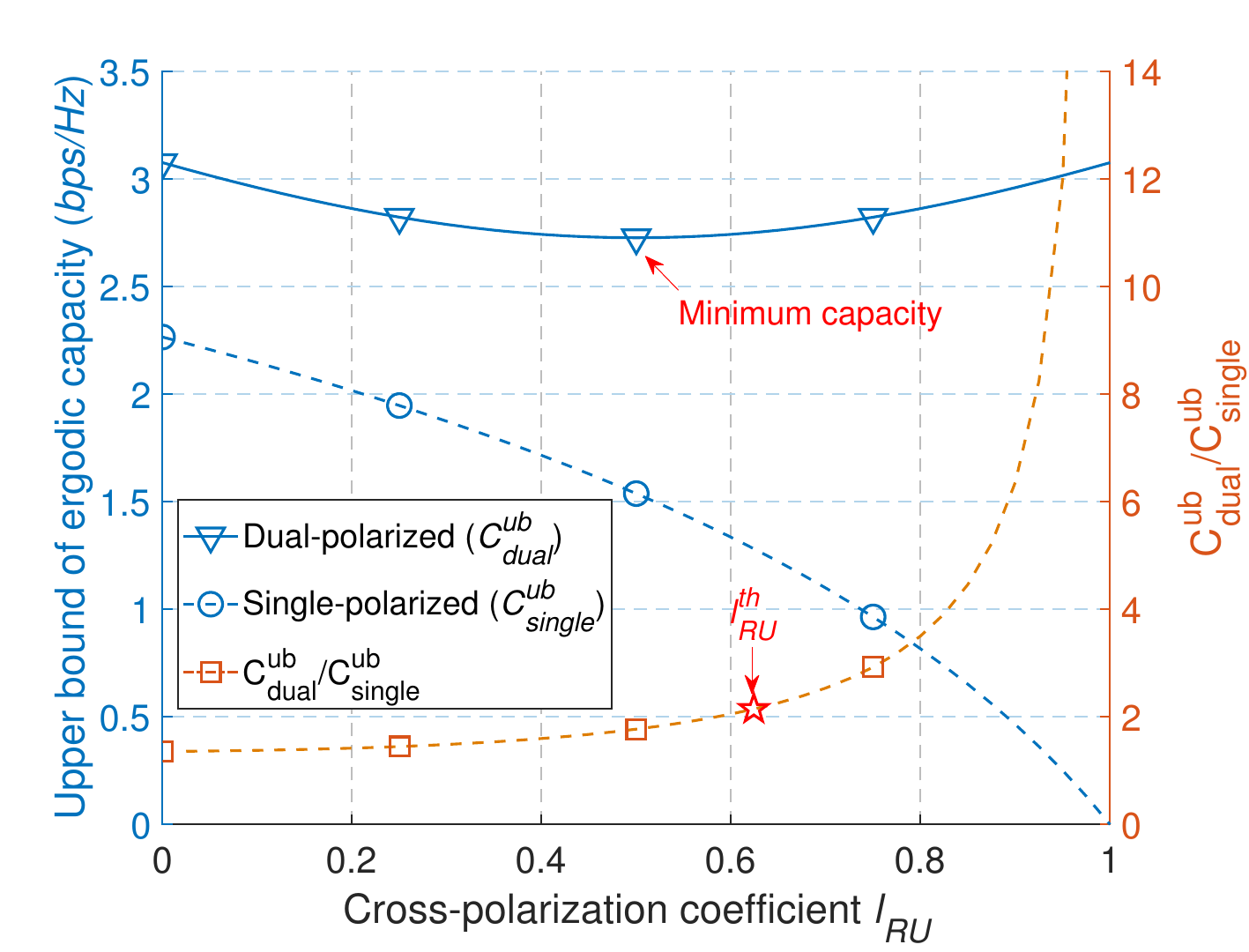}
			\vspace{-0.2cm}
			\label{fig_capacitu_vs_XPD_dual_vs_single_v4}
	\end{minipage}}
	
	\subfigure[Ergodic capacity vs. cross-polarization coefficient $l_{RU}$]{
		\begin{minipage}[b]{0.48\textwidth}
			\centering
			\includegraphics[width=.95\textwidth]{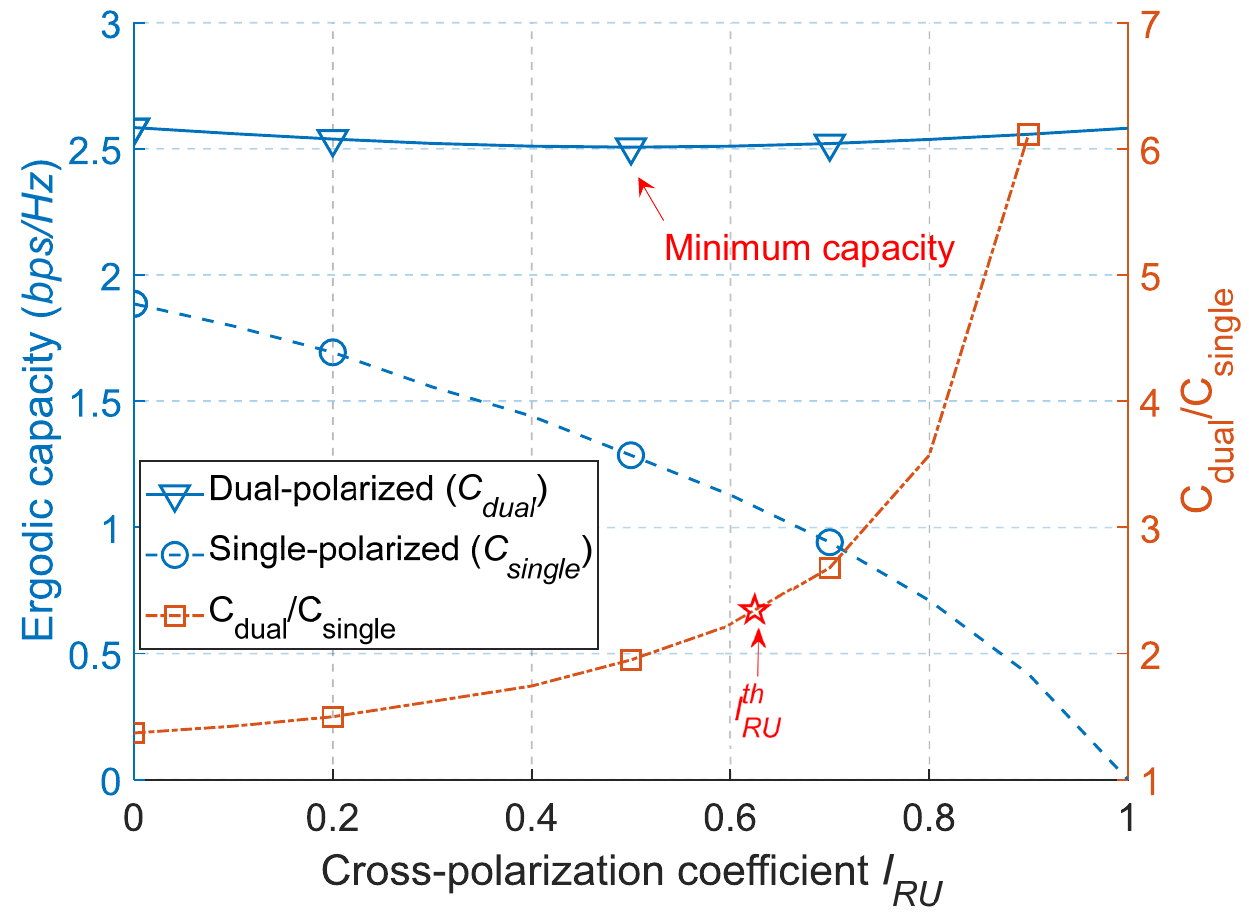}
			\vspace{-0.2cm}
			\label{fig_ergodic_capacity_vs_XPD}
	\end{minipage}}
	\vspace{-0.2cm}
	\caption{Influences of cross-polarization coefficient $l_{RU}$ on system capacity and its upper bound, with the number of RIS elements $N_R=15^2$, transmit power $P=35$~dBm, distance between the feed and the RIS $r_F=0.1$~m, and the gain of the feed $\kappa=10$~dB. Note that both the RIS and the UE antenna are $V$-polarized. Therefore, when $l_{RU}=0$ and $l_{RU}=1$, the channel perfectly matches and absolutely mismatches the polarization of the single-polarized system, respectively.}
	\label{fig_capacity_vs_XPD}
\end{figure}


{Fig.~\ref{fig_ergodic_capacity_vs_XPD} depicts the influences of cross-polarization coefficient $l_{RU}$ on the ergodic capacity. From Fig.~\ref{fig_ergodic_capacity_vs_XPD}, we can find that the trend of the ergodic capacity and the trend of the capacity ratio with respect to $l_{RU}$ are consistent with those corresponding to the capacity upper bound. {Further, both the ergodic capacity and its upper bound are minimized when $l_{RU}=\frac{1}{2}$. In addition, the derived threshold $l_{RU}^{th}$ is also effective for the ergodic capacity.} This demonstrates that the proposed upper bound can effectively characterize the ergodic capacity.}

\section{Conclusion}
\label{sec_conclusion}
In this paper, we have considered a dual-polarized RIS-enabled single-user HMIMO network. To facilitate analysis, an asymptotically tight upper bound on the ergodic capacity has been derived, based on which the power allocations across polarizations have been optimized. Given the optimal power allocations, the dual-polarized system has been compared against its single-polarized counterpart in terms of multiplexing gain and capacity upper bound. From our analytical and numerical results, the following main conclusions can be drawn:
\begin{itemize}
	\item Unlike conventional dual-polarized massive MIMO with uniform power allocations across polarizations, the BS should allocate more power to the polarization direction corresponding to larger average reflection amplitudes of the RIS elements in a dual-polarized RIS-aided system.
	\item The capacity of the dual-polarized system is more than twice that of the single-polarized one when the channel mismatches the polarization of the single-polarized system, i.e., when the polarization of the received signal is nearly orthogonal to that of the UE antenna. Besides, a closed-form threshold for polarization parameter XPD can be derived to achieve such capacity gain.
	
	\item System capacity is influenced by the deployment of the feed through incident angles of RIS elements. To maximize system capacity, the feed should be deployed directly in front of the RIS when the feed gain is relatively small. However, under relatively large feed gain, the feed should be positioned above and at an angle to the RIS.
	
\end{itemize}

\begin{appendices}
\section{Proof of Theorem~\ref{theorem_upper_bound}}
\label{appendix_upper_bound}
According to Jensen's inequality, i.e.,
\begin{align}
	\mathbb{E}\{\log_2(x)\}\le \log_2(\mathbb{E}\{x\}),
\end{align}
 the ergodic capacity in (\ref{ergodic_capacity}) is upper bounded by,
\begin{align}
\label{upper_bound}
C_{dual}\le\log_2\left(\mathbb{E}\{\det(\bm{I}_2+\rho\bm{G}\bm{\Lambda}\bm{G}^{\dagger})\}\right).
\end{align}

For the simplicity of exposition, we define
\begin{align}
\label{def_A}
\bm{A}=\bm{G}\bm{\Lambda}\bm{G}^{\dagger}=[A_{1,1},A_{1,2};A_{2,1},A_{2,2}].
\end{align}
 To this end, the entries of the matrix $\bm{A}$ can be written as a function of the entries of $\bm{G}$, i.e.,
\begin{align}
\label{A_11}
A_{1,1}=\lambda^{(V)}|G_{1,1}|^2+\lambda^{(H)}|G_{1,2}|^2,\\
\label{A_12}
A_{1,2}=\lambda^{(V)}G_{1,1}G_{2,1}^*+\lambda^{(H)}G_{1,2}G_{2,2}^*,\\
\label{A_21}
A_{2,1}=\lambda^{(V)}G_{2,1}G_{1,1}^*+\lambda^{(H)}G_{2,2}G_{1,2}^*,\\
\label{A_22}
A_{2,2}=\lambda^{(V)}|G_{2,1}|^2+\lambda^{(H)}|G_{2,2}|^2.
\end{align}	
By substituting (\ref{def_A}) into the expression $C_{dual}^{ub}$ in (\ref{upper_bound}), we have
\begin{align}
\label{derivation_C_R_ub}
C_{dual}^{ub}&=\log_2\left(\mathbb{E}\left\{\det(\bm{I}_2+\rho\bm{A})\right\}\right)\notag\\
&=\log_2\left(\mathbb{E}\{(1+\rho A_{1,1})(1+\rho A_{2,2})-\rho^2 A_{2,1}A_{1,2}\}\right)\notag\\
&=\log_2\Big(1+\rho\mathbb{E}\{A_{1,1}\}+\rho\mathbb{E}\{A_{2,2}\}+\rho^2\mathbb{E}\{A_{1,1}A_{2,2}\}\notag\\
&\quad\quad\quad-\rho^2\mathbb{E}\{A_{2,1}A_{1,2}\}\Big).
\end{align}
To this end, we will focus on the derivations of $\mathbb{E}\{A_{1,1}\}$, $\mathbb{E}\{A_{2,2}\}$, $\mathbb{E}\{A_{1,1}A_{2,2}\}$, and $\mathbb{E}\{A_{1,2}A_{2,1}\}$ in the following. 

To be specific, according to (\ref{A_11}) and (\ref{A_22}), we have
\begin{align}
\label{derivation_E_A11}
\mathbb{E}\{A_{1,1}\}=\lambda^{(V)}\mathbb{E}\{|G_{1,1}|^2\}+\lambda^{(H)}\mathbb{E}\{|G_{1,2}|^2\},\\
\label{derivation_E_A22}
\mathbb{E}\{A_{2,2}\}=\lambda^{(V)}\mathbb{E}\{|G_{2,1}|^2\}+\lambda^{(H)}\mathbb{E}\{|G_{2,2}|^2\}.
\end{align}
According to (\ref{A_11}), (\ref{A_22}) and (\ref{G_11})-(\ref{G_22}), we can see that $A_{1,1}$ and $A_{2,2}$ depend only on $(\bm{h}^{(VV)},\bm{h}^{(VH)})$ and $(\bm{h}^{(HV)},\bm{h}^{(HH)})$, respectively. Recall that the four polarization components of the RIS-based channel matrix $\bm{H}$, i.e., $\bm{h}^{(VV)}$, $\bm{h}^{(VH)}$, $\bm{h}^{(HV)}$, and $\bm{h}^{(HH)}$ are independent. Therefore, $A_{1,1}$ and $A_{2,2}$ are also independent variables, from which we can obtain the following formula:
\begin{align}
\label{derivation_A11_A22}
\mathbb{E}\{A_{1,1}A_{2,2}\}=\mathbb{E}\{A_{1,1}\}\mathbb{E}\{A_{2,2}\}.
\end{align}

Finally, we move on to the derivation of $\mathbb{E}(A_{1,2}A_{2,1})$. Based on (\ref{A_12}) and (\ref{A_21}), we have
\begin{align}
\label{derivation_A12_A21}
&\mathbb{E}\{A_{1,2}A_{2,1}\}\notag\\
&=(\lambda^{(V)})^2\mathbb{E}\{|G_{1,1}|^2|G_{2,1}|^2\}+(\lambda^{(H)})^2\mathbb{E}\{|G_{1,2}|^2|G_{2,2}|^2\}\notag\\
&\quad+\lambda^{(V)}\lambda^{(H)}\mathbb{E}\{G_{1,1}(G_{1,2})^*(G_{2,1})^*G_{2,2}\}\notag\\
&\quad+\lambda^{(V)}\lambda^{(H)}\mathbb{E}\{(G_{1,1})^*G_{1,2}G_{2,1}(G_{2,2})^*\}.
\end{align}
Similarly, due to the independence of different polarization components of the RIS-aided channel $\bm{H}$, the entries of the equivalent channel matrix $G$, i.e., $G_{1,1}$, $G_{1,2}$, $G_{2,1}$ and $G_{2,2}$, are also independent based on (\ref{G_11})-(\ref{G_22}). Therefore, we have
\begin{align}
\label{derivation_E_G_11_module_square_G_21_module_square}
\mathbb{E}\{|G_{1,1}|^2|G_{2,1}|^2\}=\mathbb{E}\{|G_{1,1}|^2\}\mathbb{E}\{|G_{2,1}|^2\},
\end{align}
\begin{align}
\label{derivation_E_G_12_module_square_G_22_module_square}
\mathbb{E}\{|G_{1,2}|^2|G_{2,2}|^2\}=\mathbb{E}\{|G_{1,2}|^2\}\mathbb{E}\{|G_{2,2}|^2\},
\end{align}
\begin{align}
&\mathbb{E}\{G_{1,1}(G_{1,2})^*(G_{2,1})^*G_{2,2}\}\notag\\
&=\mathbb{E}\{G_{1,1}\}\mathbb{E}\{G_{1,2}\}^*\mathbb{E}\{G_{2,1}\}^*\mathbb{E}\{G_{2,2}\},
\end{align}
\begin{align}
&\mathbb{E}\{(G_{1,1})^*G_{1,2}G_{2,1}(G_{2,2})^*\}\notag\\
&=\mathbb{E}\{G_{1,1}\}^*\mathbb{E}\{G_{1,2}\}\mathbb{E}\{G_{2,1}\}\mathbb{E}\{G_{2,2}\}^*\!.
\end{align}
Since the RIS-aided channels $h_n^{ji}$ are Rayleigh faded, we have
\begin{align}
\label{derivation_E_G_11_G_12_G_21_G_22}
\mathbb{E}\{G_{1,1}(G_{1,2})^*(G_{2,1})^*G_{2,2}\}=0,\\
\label{derivation_E_G_11_G_12_G_21_G_22_conjugate}
\mathbb{E}\{(G_{1,1})^*G_{1,2}G_{2,1}(G_{2,2})^*\}=0.
\end{align}
By substituting (\ref{derivation_E_G_11_module_square_G_21_module_square}), (\ref{derivation_E_G_12_module_square_G_22_module_square}), (\ref{derivation_E_G_11_G_12_G_21_G_22}), and (\ref{derivation_E_G_11_G_12_G_21_G_22_conjugate}) into (\ref{derivation_A12_A21}), we can obtain that
\begin{align}
\label{derivation_A12_A21_final}
\mathbb{E}\{A_{1,2}A_{2,1}\}=&(\lambda^{(V)})^2\mathbb{E}\{|G_{1,1}|^2\}\mathbb{E}\{|G_{2,1}|^2\}\notag\\
&+(\lambda^{(H)})^2\mathbb{E}\{|G_{1,2}|^2\}\mathbb{E}\{|G_{2,2}|^2\}.
\end{align}
By combining (\ref{derivation_C_R_ub}), (\ref{derivation_E_A11}), (\ref{derivation_E_A22}), (\ref{derivation_A11_A22}), (\ref{derivation_A12_A21_final}), we can prove (\ref{C_R_ub_1}).
\section{Proof of Remark~\ref{remark_asymptotically_tight}}
\label{appendix_asymptotically_tight}
To simplify the presentation, we define
\begin{align}
	a=\mathbb{E}\left\{|G_{1,1}|^2+|G_{1,2}|^2+|G_{2,1}|^2+|G_{2,2}|^2\right\},\\
	b=\mathbb{E}\{|G_{1,1}|^2\}\mathbb{E}\{|G_{2,2}|^2\}+\mathbb{E}\{|G_{1,2}|^2\}\mathbb{E}\{|G_{2,1}|^2\}.
\end{align}
Then, the capacity upper bound in (\ref{upper_bound}) can be rewritten as
\begin{align}
\label{reformulation_capacity_upper_bound}
C_{dual}^{ub}=\log_2\left(1+\frac{a}{2}\rho+\frac{b}{4}\rho^2\right).
\end{align}

Recall that $\log_2(1+x)\le\frac{x}{\ln2}$. Therefore, based on (\ref{reformulation_capacity_upper_bound}), we have
\begin{align}
\label{upper_bound_upper_bound}
C_{dual}^{ub}\le \frac{a}{2\ln 2}\rho+\frac{b}{4\ln 2}\rho^2.
\end{align}

A lower bound on the ergodic capacity $C_{dual}$ is also derived. According to~\cite{Oyman_statistical_2003}, we have
\begin{align}
\label{instant_rate_lower_bound}
\log_2\det\left(\bm{I}_2+\frac{\rho}{2}\bm{G}\bm{G}^{\dagger}\right)\ge \log_2\left(1+\frac{\rho}{2}\mathrm{Tr}(\bm{G}\bm{G}^{\dagger})\right).
\end{align}
Note that 
\begin{align}
	\log_2\left(1+x\right)\ge \frac{1}{\ln 2}\left(x-\frac{x^2}{2}\right).
\end{align}
Therefore, we have
\begin{align}
\label{instant_rate_lower_bound_v2}
\log_2\left(1+\frac{\rho}{2}\mathrm{Tr}(\bm{G}\bm{G}^{\dagger})\right)\ge \frac{\mathrm{Tr}(\bm{G}\bm{G}^{\dagger})}{2\ln 2}\rho-\frac{(\mathrm{Tr}(\bm{G}\bm{G}^{\dagger}))^2}{8\ln 2} \rho^2.
\end{align}
By substituting (\ref{instant_rate_lower_bound}) and (\ref{instant_rate_lower_bound_v2}) into the expression of the ergodic capacity in (\ref{ergodic_capacity}), a lower bound on the ergodic capacity can be derived as follows:
\begin{align}
\label{ergodic_capacity_lower_bound}
C_{dual}&=\mathbb{E}\{\log_2\det(\bm{I}_2+\frac{\rho}{2}\bm{G}\bm{G}^{\dagger})\}\notag\\
&\ge \frac{\mathbb{E}\{\mathrm{Tr}(\bm{G}\bm{G}^{\dagger})\}}{2\ln 2}\rho-\frac{\mathbb{E}\{\mathrm{Tr}(\bm{G}\bm{G}^{\dagger})\}^2}{8\ln 2} \rho^2.
\end{align}

By combining (\ref{upper_bound_upper_bound}) and  (\ref{ergodic_capacity_lower_bound}), we have
\begin{align}
C_{dual}^{ub}-C_{dual}\le f(\rho),
\end{align}
where $f(\rho)$ is given by
\begin{align}
f(\rho)=&\left(\frac{a}{2\ln 2}-\frac{\mathbb{E}\{\mathrm{Tr}(\bm{G}\bm{G}^{\dagger})\}}{2\ln 2}\right)\rho\notag\\
&+\left(\frac{b}{4\ln 2}+\frac{\mathbb{E}\{\mathrm{Tr}(\bm{G}\bm{G}^{\dagger})\}^2}{8\ln 2}\right)\rho^2.
\end{align}
Note that
\begin{align}
	\lim_{\rho\rightarrow 0}f(\rho)=0.
\end{align}
Also, we have $C_{dual}^{ub}-C_{dual}\ge 0$. Therefore, according to the Squeeze Theorem, we have
\begin{align}
\lim_{\rho\rightarrow 0} C_{dual}^{ub}-C_{dual}=0,
\end{align} 
which completes the proof.

\section{Proof of Theorem~\ref{the_opt_phase}}
\label{app_opt_phase}
According to~(\ref{C_R_ub_1}), we can find that the capacity upper bound $C_{dual}^{ub}$ achieves its maximum when $\mathbb{E}\{|G_{1,1}|^2\}$, $\mathbb{E}\{|G_{1,2}|^2\}$, $\mathbb{E}\{|G_{2,1}|^2\}$, and $\mathbb{E}\{|G_{2,2}|^2\}$ are simultaneously maximized. To this end, we first focus on the maximization of $\mathbb{E}\{|G_{1,1}|^2\}$. 

According to (\ref{G_11}), we have
\begin{align}
	\label{derivation_E_G_11_module_square}
	&\mathbb{E}\{|G_{1,1}|^2\}\notag\\
	&=\mathbb{E}\left\{|\bm{h}^{(VV)}\bm{\Gamma}^{(V)}\bm{b}^{(VV)}|^2\right\},\notag\\
	&=\mathbb{E}\left\{(\bm{h}^{(VV)}\bm{\Gamma}^{(V)}\bm{b}^{(VV)})(\bm{h}^{(VV)}\bm{\Gamma}^{(V)}\bm{b}^{(VV)})^*\right\},\notag\\
	&=\mathbb{E}\left\{(\sum_n h_n^{(VV)}\Gamma_n^{(V)}b_n^{(VV)})(\sum_n h_n^{(VV)}\Gamma_n^{(V)}b_n^{(VV)})^*\right\}\notag\\
	&=\mathbb{E}\left\{\sum_{n_1}\sum_{n_2} h_{n_1}^{(VV)}(h_{n_2}^{(VV)})^*\Gamma_{n_1}^{(V)}(\Gamma_{n_2}^{(V)})^*b_{n_1}^{(VV)}(b_{n_2}^{(VV)})^*\right\}\notag\\
	&=\sum_{n_1}\sum_{n_2}\Gamma_{n_1}^{(V)}(\Gamma_{n_2}^{(V)})^*b_{n_1}^{(VV)}(b_{n_2}^{(VV)})^* \mathbb{E}\left\{h_{n_1}^{(VV)}(h_{n_2}^{(VV)})^*\right\}.
\end{align}
According to the RIS-aided channel model in (\ref{model_RRS_2_UE}) and the definition of the correlation matrix $\bm{R}$ in (\ref{def_correlation_matrix}), we have
\begin{align}
	\label{correlation_RRS_aided_channel}
	\mathbb{E}\{h_{n_1}^{(VV)}(h_{n_2}^{(VV)})^*\}=\sqrt{\beta_{n_1}^{(VV)}\beta_{n_2}^{(VV)}}\bm{R}(n_1,n_2).
\end{align}
By substituting (\ref{correlation_RRS_aided_channel}) into (\ref{derivation_E_G_11_module_square}), $\mathbb{E}(|G_{1,1}|^2)$ can be further written as
\begin{align}
	\label{derivation_E_G_11_module_square_v2}
&\mathbb{E}\{|G_{1,1}|^2\}\notag\\
&=\sum_{n_1, n_2}\!\Gamma_{n_1}^{(V)}b_{n_1}^{(VV)}(\Gamma_{n_2}^{(V)}b_{n_2}^{(VV)})^* \sqrt{\beta_{n_1}^{(VV)}\beta_{n_2}^{(VV)}}\bm{R}(n_1,n_2).
\end{align}

When the reflection phase shifts are set to those given in (\ref{opt_phase}), $\mathbb{E}\{|G_{1,1}|^2\}$ can be rewritten as
\begin{align}
	\label{derivation_E_G_11_module_square_max}
	\mathbb{E}\{|G_{1,1}|^2\}=(1-l_{RU})O^{(V)}.
\end{align}
Similarly, we can prove that given the reflection phase shifts in (\ref{opt_phase}), $\mathbb{E}\{|G_{1,2}|^2\}$, $\mathbb{E}\{|G_{2,1}|^2\}$, and $\mathbb{E}\{|G_{2,2}|^2\}$ are also maximized as
\begin{align}
	\label{derivation_E_G_12_module_square_max}
	\mathbb{E}\{|G_{1,2}|^2\}=l_{RU}O^{(H)},\\
	\label{derivation_E_G_21_module_square_max}
	\mathbb{E}\{|G_{2,1}|^2\}=l_{RU}O^{(V)},\\
	\label{derivation_E_G_22_module_square_max}
	\mathbb{E}\{|G_{2,2}|^2\}=(1-l_{RU})O^{(H)},
\end{align}
respectively. By substituting (\ref{derivation_E_G_11_module_square_max}), (\ref{derivation_E_G_12_module_square_max}), (\ref{derivation_E_G_21_module_square_max}), and (\ref{derivation_E_G_22_module_square_max}) into (\ref{C_R_ub_1}), we can obtain the maximized capacity upper bound under the optimal reflection phase shifts as shown in (\ref{C_R_ub_max}), which completes the proof.
\section{Proof of Theorem~\ref{theorem_XPD}}
\label{appendix_XPD} 
It is easy to prove that $C_{dual}^{ub}$ increases with $l_{RU}^2+(1-l_{RU})^2$. Therefore, to find out how the capacity upper bound $C_{dual}^{ub}$ changes with the cross-polarization coefficient $l_{RU}$, we need to consider the trend of $l_{RU}^2+(1-l_{RU})^2$. To this end, we define 
\begin{align}
	g(l_{RU})=l_{RU}^2+(1-l_{RU})^2,
\end{align}
whose derivative is given by
\begin{align}
	\label{derivative_upper_bound}
	g'(l_{RU})=4l_{RU}-2.
\end{align}
It follows that $g'(l_{RU})$ is negative when $l_{RU}<\frac{1}{2}$ while it is non-negative otherwise, which completes the proof. 

\section{Proof of Lemma~\ref{lemma_single_polarization}}
\label{appendix_single_polarization}
The received signal of a user with a $V$-polarized antenna from the $V$-polarized feed through the $V$-polarized RIS can be given by
\begin{align}
\label{rec_sig_single}
y_{single}=\bm{h}^{(VV)}\bm{\Gamma}^{(V)}\bm{b}^{(VV)}x^{(V)}+n,
\end{align}
where $\bm{b}^{(VV)}$ is the propagation effect from the feed to the RIS defined in (\ref{propagation_coeff}), $\bm{\Gamma}^{(V)}$ represents the matrix consisting of the reflection coefficients of the RIS elements defined in (\ref{Gamma}), $\bm{h}^{(VV)}$ denotes the channel from the RIS to the UE given in (\ref{channel}), $x^{(V)}$ is the transmitted signal of the $V$-polarized feed, and $n$ is AWGN. According to (\ref{rec_sig_single}), the ergodic capacity of the single-polarized RIS-based system can be obtained, as given by (\ref{ergodic_capacity_single}). Similarly, by applying Jensen's inequality, an upper bound on the system capacity can be derived, as shown in (\ref{ergodic_capacity_single_upper_bound}). Similar to the proof of Theorem~\ref{the_opt_phase}, it is easy to find that by setting the phase shifts of the RIS as in (\ref{opt_phase_single}), the capacity upper bound on the single-polarized system will be maximized, with the maximum capacity upper bound given by (\ref{ergodic_capacity_single_upper_bound_opt_phase}). 
\section{Proof of Theorem~\ref{theorem_multiplexing_gain_cmp}}
\label{appendix_multiplexing_gain_cmp}
Recall the definition of the matrix $\bm{A}$ in (\ref{A_11})-(\ref{A_22}), based on which we have
\begin{align}
\label{instant_capacity}
&\log_2\det(\bm{I}_2+\frac{\rho}{2}\bm{G}\bm{G}^{\dagger})\notag\\
=&\log_2\left(1+\frac{\rho}{2}(A_{1,1}+A_{2,2})+\frac{\rho^2}{4}(A_{1,1}A_{2,2}+A_{2,1}A_{1,2})\right).
\end{align}
Note that the entries of $\bm{A}$, i.e., $A_{1,1}$, $A_{1,2}$, $A_{2,1}$, and $A_{2,2}$ are not correlated with the transmit SNR $\rho$. Therefore, based on (\ref{instant_capacity}), we can obtain
\begin{align}
\label{instant_capacity_multiplexing_gain}
\lim_{\rho\rightarrow +\infty} \frac{\log_2\det(\bm{I}_2+\frac{\rho}{2}\bm{G}\bm{G}^{\dagger})}{\log_2\rho}=2.
\end{align}
By combining (\ref{instant_capacity_multiplexing_gain}) and (\ref{ergodic_capacity}), we have
\begin{align}
\lim_{\rho\rightarrow +\infty}\frac{C_{dual}}{\log_2\rho}&=\lim_{\rho\rightarrow +\infty}\frac{\mathbb{E}\left\{\log_2\det(\bm{I}_2+\frac{\rho}{2}\bm{G}\bm{G}^{\dagger})\right\}}{\log_2\rho}\notag\\
&=\mathbb{E}\left\{\lim_{\rho\rightarrow +\infty}\frac{\log_2\det(\bm{I}_2+\frac{\rho}{2}\bm{G}\bm{G}^{\dagger})}{\log_2\rho}\right\}\notag\\
&=2.
\end{align}
Similarly, we can prove (\ref{multiplexing_gain_single}).
\begin{figure*}[!hb]
	\rule[-12pt]{17.5cm}{0.05em}
	\begin{equation}
		\setlength{\abovedisplayskip}{3pt}
		\setlength{\belowdisplayskip}{-3pt}
		\begin{aligned}
			\label{lower_bound_of_upper_bound}
			C_{dual}^{ub} \ge \log_2\left(1+\frac{\rho(O^{(H)}+O^{(V)})}{2}+ \frac{\rho^2O^{(H)}O^{(V)}(l_{RU}^2+(1-l_{RU})^2)}{4}\right)
			\triangleq \hat{C}_{dual}^{ub},
		\end{aligned}
	\end{equation}
\end{figure*}
\section{Proof of Remark~\ref{remark_single_polarized_3}}
\label{app_single_polarized_3}
First, a lower bound on the capacity $C_{dual}^{ub}$ of the dual-polarized RIS-based system is derived. Note that $C_{dual}^{ub}$ in (\ref{C_R_ub_max_max}) corresponds to the maximum system capacity under an optimal power allocation $((\lambda^{(H)})^*,(\lambda^{(V)})^*)$. Therefore, it is lower bounded by the system capacity under equal power allocations over different polarizations, as indicated in (\ref{lower_bound_of_upper_bound}) shown at the bottom of the next page. 

Then, we focus on proving that when $l_{RU}>l_{RU}^{(th)}$, the lower bound given in (\ref{lower_bound_of_upper_bound}) is more than twice that of the capacity of the single-polarized RIS-based system, i.e., $\hat{C}_{dual}^{ub}-2C_{single}^{ub}>0$, based on which we can prove Remark~\ref{remark_single_polarized_3}. Specifically, due to the monotonicity of $\log(x)$, we can obtain (\ref{equivalent_double_capacity}) shown at the bottom of the next page.
\begin{figure*}[!hb]
	\rule[-12pt]{17.5cm}{0.05em}
	\begin{equation}
		\setlength{\abovedisplayskip}{3pt}
		\setlength{\belowdisplayskip}{-3pt}
		\begin{aligned}
			\label{equivalent_double_capacity}
			\hat{C}_{dual}^{ub}-2C_{single}^{ub}>0\iff 1+\frac{\rho(O^{(H)}+O^{(V)})}{2}+ \frac{\rho^2O^{(H)}O^{(V)}(l_{RU}^2+(1-l_{RU})^2)}{4}-\left(1+\rho(1-l_{RU})O^{(V)}\right)^2>0.
		\end{aligned}
	\end{equation}
\end{figure*} 
Define $g(l_{RU})$ as the function describing the relationship between the left-hand-side of the inequality in (\ref{equivalent_double_capacity}) and $l_{RU}$, where $g(l_{RU})$ is given by
\begin{align}
	g(l_{RU})=a(l_{RU})^2+bl_{RU}+c.
\end{align} 
Note that $l_{RU}^{(th)}$ is the smaller of the two roots of $g(l_{RU})$. Further, we have 
\begin{align}
	a<0,\quad-\frac{b}{2a}>0,\quad g(0)<0,\quad g(1)>0
\end{align}
Therefore, when $l_{RU}$ exceeds the threshold $l_{RU}^{(th)}$, $g(l_{RU})$ always takes positive values, which completes the proof.
\end{appendices}

\end{document}